\newcommand{\cupdot}{\mathbin{\dot{\cup}}}
\newcommand{\m}[1]{\mathcal{#1}}
\newcommand{\mb}[1]{\mathbb {#1}}
\newcommand{\ti}[1]{\widetilde {#1}}
\newcommand{\ol}[1]{\overline{#1}}
\newcommand{\rest}[1]{_{\big| {#1}}}
\theoremstyle{plain}
\newtheorem{thm}{Theorem}[section]
\newtheorem*{thm*}{\bf Theorem }
\newtheorem{lem}[thm]{Lemma}
\newtheorem{prop}[thm]{Proposition}
\newtheorem*{prop*}{\bf Proposition}
\newtheorem{cor}[thm]{Corollary}
\theoremstyle{definition}
\newtheorem{defn}[thm]{Definition}
\newtheorem{example}[thm]{Example}
\theoremstyle{remark}
\newtheorem{rem}[thm]{Remark}
\numberwithin{equation}{section}
\title{Feynman amplitudes on moduli spaces of graphs}
\author{Marko Berghoff}
\begin{document}

\begin{abstract}
This article introduces moduli spaces of coloured graphs on which Feynman amplitudes can be viewed as ''discrete'' volume densities. The basic idea behind this construction is that these moduli spaces decompose into disjoint unions of open cells on which parametric Feynman integrals are defined in a natural way.  
Renormalisation of an amplitude translates then into the task of assigning to every cell a finite volume such that boundary relations between neighboring cells are respected. It is shown that this can be organized systematically using a type of Borel-Serre compactification of these moduli spaces. The key point is that in each compactified cell the newly added boundary components have a combinatorial description that resembles the forest structure of subdivergences of the corresponding Feynman diagram.  
\end{abstract}
\maketitle

\section{Introduction}

Understanding the analytic structure of functions defined by Feynman integrals is a long standing open problem in quantum field theory. Although many techniques and folklore theorems are being used in everyday practical calculations, our theoretical understanding of these structures is still far from satisfying. For instance, Cutkosky's theorem on branch cuts and monodromies of Feynman integrals \cite{cr} has been used in calculations for decades, but was proven only recently with the help of algebro-geometric methods in \cite{bk-cr}. In the process, Bloch and Kreimer mention a new idea to approach further studies of analytic structures in Feynman integrals using \textit{Outer space} (and related spaces), a construction from geometric group theory \cite{cv}. 

Inspired by Teichm\"uller theory, the basic idea behind Outer space $CV_n$ and its variants is to study automorphisms of free groups $F_n$ by their action on geometric objects, in this case built out of combinatorial graphs of rank $n$ equipped with additional (topological) data. These spaces and the corresponding actions have nice properties, adding geometric and topological methods to the group theorist's toolbox. One such property is that the action projects onto an action of $Out(F_n)$, the group of outer automorphisms of $F_n$, which acts on $CV_n$ with finite stabilizers. Since Outer space is contractible, it follows that the orbit space of this action, the moduli space of rank $n$ metric graphs, is a rational classifying space for $Out(F_n)$. It encodes thus its rational homology.

In \cite{hv} the homology of $Aut(F_n)$ is computed utilizing a cubical cell structure of the corresponding moduli space of rank $n$ graphs with a marked basepoint (in this case inner automorphisms act non-trivially). Quite surprisingly, the results in \cite{bk-cr} show that the same structure is found in the study of poles and branch cuts of Feynman integrals; the combinatorial operations involved in determining these critical subsets in the space of external momenta of a given Feynman diagram $G$, contracting subsets of its edge-set and putting edge-propagators in the Feynman-integrand on mass-shell, form a similar chain complex of cubes.
\newline

The aim of this article is to add another observation to the list of connections between these two so-far unrelated\footnote{A relation between the underlying combinatorial structures of the constructions in \cite{bv} and \cite{bek} was already noted in \cite{bk}, but not further pursued.} fields; the similarity between certain bordifications of spaces of graphs, as in \cite{bv,v-bord}, and the algebraic geometer's approach to renormalisation of Feynman integrals, as in \cite{bk-as}, based on the methods of \cite{bek}. 

The basic idea is that each Feynman integral $I_G$ can be interpreted as the volume of a cell $\sigma_G$ in an appropriate moduli space of graphs. If the integral is divergent, all its divergences sit on certain faces of $\sigma_G$ or, in the language of moduli spaces, at infinity. Thus, renormalisation translates in this formulation into the task of rendering this integral convergent at infinity. This can be formulated conveniently using distributions on $\sigma_G$. First, the cell $\sigma_G$ is compactified in the sense of Borel-Serre, in order to have better control of the behaviour of $I_G$ at infinity, then the necessary subtractions are employed to take care of the divergences, now situated at the boundary of the compactified cell, in accordance with the usual renormalisation schemes.

Moreover, the nature of these moduli spaces of graphs allows to treat all integrals corresponding to a given rank and number of external edges at once, so that we can formulate Feynman amplitudes - albeit a rather unphysical version - as generalized distributions on these spaces. 
Roughly speaking, one sums over each cell $\sigma_G$, where $G$ is a graph of rank $n$ with $k$ external edges labeled by an external momentum configuration $p$, integrated against a density $\omega_G$ (depending on $p$) that is determined from $G$ by Feynman rules in their parametric representation,
\begin{equation*}
\textit{(unrenormalised) $n$-loop contribution to } \m A(p) = \sum_{\mathit{rank}(G)= n} \langle \omega_G(p) \mid \sigma_G \rangle.
\end{equation*}

To formulate this precisely and extend it to a renormalised version is the goal of the present article. The essential ingredient for this to work is the equivalence of the combinatorics behind renormalisation and the above mentioned compactification method. 
\newline

The article is organized as follows. In Section \ref{prelim} we set up some basic notation that will be needed throughout the text. The following two sections serve as a very!\ short introduction to the central topics, Feynman integrals and renormalisation on one side and moduli spaces of graphs and their bordifications on the other side. Since the focus lies on the combinatorial aspects behind both constructions, the exposition is kept rather basic; for technical details or a more thorough introduction on each individual topic the interested reader is invited to consult the given literature references. Section \ref{pscpwd} introduces the notions of piecewise distributions and pseudo complexes which allow to define a sort of discrete integration theory on such spaces. The next section connects all the previously introduced concepts by applying this theory to the case of Feynman integrals in their parametric formulation and moduli spaces of metric (coloured) graphs. The final section finishes with a discussion of the renormalisation problem and its solution.

\section{Preliminaries}\label{prelim}
We start with some basic definitions and notation conventions.

\begin{defn}\label{graphs}
A graph $G$ is a quadruple $G=(V,H,s,c)$ where $V$ is the set of vertices of $G$ and $H$ its set of half-edges. The map $s:H\rightarrow V$ attaches each half-edge to its source vertex, the map $c:H\rightarrow H$ connects half-edges and satisfies $c^2=id_H$. If $c(h)=h'\neq h$ the pair $e=\{h,h'\}$ is called an internal edge of $G$. We denote the set of internal edges of $G$ by $E=E(G)$ and its cardinality by $N=N_G$. The remaining half-edges, satisfying $c(h)=h$, are called external edges or legs or hairs.

An (internal) edge subgraph $\gamma \subset G$ is determined by a subset $E(\gamma)\subset E(G)$ of the internal edges of $G$. The vertex set of $\gamma$ consists of all vertices of $G$ that are connected to edges of $\gamma$. So $\gamma$ is a graph itself without external edges.
\end{defn}
\begin{rem}
In the following it will be convenient to retreat to the ''usual'' definition of combinatorial graphs, i.e.\ as tuples $(V,E)$ with an attaching map $\partial: E \rightarrow Sym^2(V)$ and treat legs merely as additional data. In Section \ref{osmsog} where we take a topological point of view we think of graphs simply as of one dimensional CW-complexes. In this case legs can be modeled either by introducing auxiliary external vertices of valence one or as additional labels on the vertex set $V$.
\end{rem}
We need two operations on graphs throughout this work, the contraction and deletion of subgraphs.
\begin{defn}
Let $G$ be a graph and $\gamma \subset G$ a connected subgraph. The contracted graph $G/\gamma$ is given by replacing $\gamma$ by a vertex and connecting each edge in $E(G)\setminus E(\gamma)$ with it. If $\gamma$ is a disjoint union of subgraphs the contraction is defined componentwise.

The deletion of $\gamma$ in $G$ is the graph $G \setminus \gamma$ with $V(G \setminus \gamma)= V(G)$ but all edges in $E(\gamma)$ removed, $E(G \setminus \gamma)= E(G)\setminus E(\gamma)$.
\end{defn}

Some special types of graphs:
\begin{defn} Let $G$ be a graph. Its rank or loop number will be denoted by $|G|:=h_1(G)=|H_1(G)|.$
\begin{enumerate}
 \item[1.] $G$ is called core or $1$PI if removing any edge reduces its rank, $|G\setminus e|<|G|$.
 \item[2.] A forest in $G$ is a subgraph $T \subset G$ with $|T|=0$. If $T$ is connected it is called a tree.
 \item[3.] A forest or tree in $G$ is spanning if its vertex sets equals $V=V(G)$.
 \item[4.] A rose graph with $n$ petals is a graph $R_n$ with one vertex and $n$ internal edges. The case $n=1$ is known as a tadpole in physics.
 \item[5.] An (proper) edge-colouring of $G$ is a map $c:E(G) \rightarrow C$ that assigns to every edge $e\in E(G)$ a colour $c(e)$ in a set of colours $C$ such that no two adjacent edges are assigned the same colour.
\end{enumerate}
\end{defn}

\section{Feynman integrals}
\subsection{Parametric Feynman integrals}\label{feyint}
Let $G$ be a connected graph with $N$ internal and $k$ external edges. We refer to $G$ as a Feynman diagram if it is equipped with additional physical data. It describes then a term in the perturbative expansion of some physical quantity, typically a particle scattering process. Here we consider the case where one associates to every internal edge a mass $m_e\geq 0$ and to each leg a momentum $p_i \in \mb R^d$. The $p_i$ are vectors in $d$-dimensional Minkowski (or Euclidean) spacetime for even $d\in 2\mb N$ and satisfy \textit{momentum conservation}
\begin{equation*}
\sum_{i=1}^k p_i = 0.
\end{equation*}
We abbreviate this external data by $(p,m)$. 
\newline

\textit{Feynman rules} assign to a graph $G$, labeled by $(p,m)$, the integral
\begin{equation}\label{fi}
I_G(p,m) := \int_{\sigma_G} \omega_G(p,m),
\end{equation}
where 
\begin{equation*}
 \sigma_G=\mb P (\mb R_+^N)= \{[x_1: \ldots : x_N] \mid x_i \geq 0 \}\subset \mb P(\mb R^N)
\end{equation*}
is the subset of projective space formed by all points with non-negative homogeneous coordinates and the differential form $\omega_G$ is defined using two graph polynomials as follows.
\begin{defn}\label{graphpolys}
Let $G$ be a connected graph. The first Symanzik (or Kirchhoff) polynomial is defined as
\begin{equation*}
 \psi_G= \sum_{T \subset G} \prod_{e \notin  T} x_e,
\end{equation*}
where the sum is over all spanning trees of $G$.

The second Symanzik polynomial is defined as 
\begin{equation*}
 \phi_G= \sum_{T=T_1\cup T_2 \subset G} (p^{T_1})^2 \prod_{e \notin T_1 \cup T_2}x_e,
\end{equation*}
where the sum is now over all spanning $2$-forests $T=T_1\cup T_2$ - a spanning $2$-forest is a disjoint union of two trees $T_1$ and $T_2$ in $G$ with $V(G)=V(T_1)\cupdot V(T_2)$ - and
\begin{equation*}
 p^{T_1}:= \sum_{v \in V(T_1)} p_v
\end{equation*}
is the sum of all external momenta entering the component of $G$ that is spanned by $T_1$. By momentum conservation, it equals $-p^{T_2}$.

If $G=G_1 \cupdot \ldots \cupdot G_k $ is a disjoint union of graphs, then $\psi_G$ and $\phi_G$ are defined by
\begin{equation*}
 \psi_G= \prod_{i=1}^k \psi_{G_i} , \quad  \phi_G= \sum_{i=1}^k  \phi_{G_i} \prod_{j \neq i}^k\psi_{G_j} . 
\end{equation*}
\end{defn}

For more on these polynomials and how renormalisability of Feynman integrals crucially depends on some of their properties, see \cite{bk-as}. We cite two important relations in 
\begin{prop}\label{condel} Let $G$ be connected. Then
 \begin{equation}\label{edgeco}
 {\psi_G}\rest{x_e=0} = \psi_{G / e}  ,\quad  {\phi_G}\rest{x_e=0}= \phi_{G/e},
 \end{equation}
 and
 \begin{equation}\label{scaling}
        \psi_G=\psi_{\gamma} \psi_{G / \gamma} + R_\gamma,\quad \phi_G= \psi_\gamma \phi_{G / \gamma} + R_\gamma ',
       \end{equation}
       where $R_\gamma$ and $R'_\gamma$ are both of degree strictly greater than $deg(\psi_\gamma)=|\gamma|$ in the variables $x_e$, $e \in E(\gamma)$.
\end{prop}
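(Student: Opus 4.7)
The plan is to prove both pairs of identities by directly manipulating the spanning-tree and spanning-$2$-forest expansions of $\psi_G$ and $\phi_G$ from Definition \ref{graphpolys}.

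For the contraction identities \eqref{edgeco}, I would set $x_e = 0$ in $\psi_G = \sum_T \prod_{f \notin T} x_f$. Every monomial coming from a spanning tree $T$ with $e \notin T$ then vanishes, leaving only the spanning trees containing $e$. Under the bijection $T \mapsto T/e$ between spanning trees of $G$ containing $e$ and spanning trees of $G/e$, the surviving monomial is exactly $\prod_{f \notin T/e,\, f \in E(G/e)} x_f$, and summing gives $\psi_{G/e}$. For $\phi_G$ the argument is identical with spanning $2$-forests $F = T_1 \cup T_2$ in place of trees; one additionally needs the observation that the momentum weight $(p^{T_i})^2$ is invariant under contracting an internal edge, which is clear since $e$ carries no external leg and its endpoints lie in the same component of $F$.

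For the scaling identities \eqref{scaling}, I would assume $\gamma$ connected (the general case reduces to this via the product formulas of Definition \ref{graphpolys}) and partition the spanning trees $T$ of $G$ according to the edge set $T_\gamma := E(T) \cap E(\gamma)$. Since $T$ is acyclic, $T_\gamma$ is a forest in $\gamma$ with $|T_\gamma| \leq |V(\gamma)| - 1$, with equality exactly when $T_\gamma$ is a spanning tree of $\gamma$. In the equality case a direct edge-count shows that $T/\gamma$ is a spanning tree of $G/\gamma$, and the map $T \mapsto (T_\gamma, T/\gamma)$ is a bijection onto pairs of such trees. Since edges of $G$ outside $E(\gamma)$ correspond canonically to edges of $G/\gamma$, this portion of $\psi_G$ factors as $\psi_\gamma \cdot \psi_{G/\gamma}$, of degree exactly $|E(\gamma)| - (|V(\gamma)| - 1) = |\gamma|$ in the variables $x_e$, $e \in E(\gamma)$. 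In the strict-inequality case, $|E(\gamma) \setminus T_\gamma| \geq |\gamma| + 1$, so this contribution has degree $> |\gamma|$ in those variables and is absorbed into $R_\gamma$.

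The $\phi_G$ scaling proceeds along the same lines with spanning $2$-forests $F = T_1 \cup T_2$ in place of spanning trees. When $F_\gamma := E(F) \cap E(\gamma)$ is a spanning tree of $\gamma$, connectedness of $\gamma$ forces all its vertices into a single component (say $T_1$), so $F/\gamma = (T_1/\gamma) \cup T_2$ is a spanning $2$-forest of $G/\gamma$ and the momentum weight $p^{T_1}$ descends unchanged to $p^{T_1/\gamma}$. Summing over this class yields $\psi_\gamma \cdot \phi_{G/\gamma}$, again of degree $|\gamma|$ in the $\gamma$-variables, while the remaining configurations give $R'_\gamma$ of strictly higher degree by the same count. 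The main subtlety I expect is precisely this last step: one has to simultaneously track the $2$-forest structure, the placement of $\gamma$ inside it, and the invariance of the external momentum sum under contraction. Once these are disentangled, the argument reduces to the edge-count already established for $\psi_G$.
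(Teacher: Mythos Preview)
Your proposal is correct and follows essentially the same strategy as the paper: partitioning the spanning trees (resp.\ spanning $2$-forests) of $G$ according to how they intersect $\gamma$ (or the single edge $e$). The paper compresses this into a single sentence, whereas you have spelled out the bijections, the edge counts giving the degree bound $|\gamma|$, and the momentum-weight invariance; these are exactly the details one needs to fill in, and you have done so accurately.
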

\begin{proof}
 Both statements follow from Definition \ref{graphpolys} by partitioning the set of all spanning trees or $2$-forests of $G$ into those that do or do not intersect with $\gamma$.  
\end{proof}

Finally, let $\Xi_G$ denote the polynomial
\begin{equation*}
 \Xi_G= \phi_G + \psi_G \sum_{i=1}^N m_i^2 x_i 
\end{equation*}
and define the differential form $\omega_G$ by
\begin{equation}\label{omega}
 \omega_G(p,m)= \psi_G^{-\frac{d}{2}}\left( \frac{\psi_G}{\Xi_G(p,m)} \right)^{N-|G|\frac{d}{2}} \nu_G=:f_G(p, m)\nu_G
\end{equation}
with 
\begin{equation*}
 \nu_G=\nu_N = \sum_{i=1}^N (-1)^i x_i dx_1 \wedge \ldots \wedge \widehat{dx_i} \wedge \ldots \wedge dx_N.
\end{equation*}

\begin{example}
Let $G$ be the ''Dunce's cap'' graph, depicted in Figure \ref{dunce}. In $d=4$ we have $N-|G|\frac{d}{2}=0$ and $ \psi_G=x_3x_4 + x_2x_3 + x_2x_4 + x_1x_3 + x_1x_4$, so that 
\begin{equation*}
 I_G(p,m)=\int_{\mb P(\mb R_+^4)} \frac{\nu_4(x_1,x_2,x_3,x_4)}{(x_3x_4 + x_2x_3 + x_2x_4 + x_1x_3 + x_1x_4)^{2}}.
\end{equation*}
Note that the denominator vanishes for $x_3=x_4=0$ rendering the integral divergent. This is a general phenomenon which we discuss in the next section.
   \begin{figure}[!htb]
  \begin{tikzpicture}
   \coordinate (v0) at (0,0);
   \fill[black] (v0) circle (.066cm);
   \coordinate  (v1) at (2,-1);
   \fill[black] (v1) circle (.066cm);
   \coordinate (v2) at (2,1);
   \fill[black] (v2) circle (.066cm); 
   \coordinate (p3) at (2.7,-1.35);
   \coordinate (p4) at (2.7,1.35);
   \coordinate (p1) at (-0.7,0.35);
   \coordinate (p2) at (-0.7,-0.35);
   \draw (v1) -- (p3) node [xshift=0.3cm,yshift=-0.1cm] {$p_3$};
   \draw (v2) -- (p4) node [xshift=0.3cm,yshift=-0.1cm] {$p_4$};
   \draw (v0) -- (p1) node [xshift=-0.2cm,yshift=0.1cm] {$p_1$};
   \draw (v0) -- (p2) node [xshift=-0.2cm,yshift=-0.2cm] {$p_2$};
   \draw (v0) -- (v1) node [midway,below] {$m_2$};
   \draw (v0) -- (v2) node [midway,above] {$m_1$};
   \draw (v2) to[out=-135,in=135] (v1) node [left,xshift=0.22cm,yshift=0.9cm] {$m_3$};
   \draw (v2) to[out=-45,in=45] (v1) node [right,xshift=0.33cm,yshift=0.9cm] {$m_4$};  
  \end{tikzpicture}
\caption{Dunce's cap}\label{dunce}
 \end{figure}
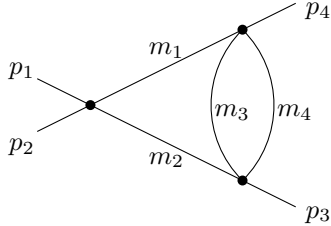
\end{example}

\subsection{Renormalisation}\label{renorm}

In general the integral $I_G$ in \eqref{fi} is ill-defined; $f_G$ may have non-integrable singularities at the loci where certain subsets of edge variables vanish\footnote{A possible overall divergence is in the projective representation we are using here hidden in a prefactor of $I_G$, cf.\ Remark \ref{rem6}.}. The condition for such \textit{ultraviolet} divergences to appear can be phrased in terms of the subgraph that is spanned by the edges corresponding to these variables. It depends only on the topology of that subgraph through its \textit{superficial degree of divergence}
\begin{equation}\label{superficial}
s_\gamma= d|\gamma|-2N_\gamma.
\end{equation}

There is also the possibility of so-called \textit{infrared} divergences which we avoid here by considering only massive diagrams (all $m_i>0$) or \textit{generic external momentum configurations},
\begin{equation}\label{gem} 
\big(\sum_{i \in I}p_i \big)^2>0 \text{ for all proper subsets } \emptyset \neq I \subsetneq \{1, \ldots, k \}.
\end{equation} 
For a discussion of infrared divergences in the framework presented here, see \cite{fbmotic}. 

In our case divergences can only appear at zeroes of $\psi_G$ and we have Weinberg's theorem \cite{weinberg} which is a cornerstone for renormalisation theory.
\begin{prop}
Under the above conditions, the Feynman integral \eqref{fi} is convergent if and only if for all subgraphs $\gamma \subset G$ it holds that $s_\gamma<0$.
\end{prop}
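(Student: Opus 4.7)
The plan is to reduce convergence of $I_G$ to a power-counting analysis along each boundary face of $\sigma_G$, and to combine these local estimates via a sector decomposition.

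First, I would compute the leading behaviour of $\omega_G$ near the face $F_\gamma = \{x_e = 0 \mid e \in E(\gamma)\}$ of $\sigma_G$ for an arbitrary subgraph $\gamma \subset G$. Parametrising a neighbourhood by $x_e = t y_e$ for $e \in E(\gamma)$ (with $\sum_{e \in E(\gamma)} y_e = 1$) and holding the remaining variables $x'$ fixed, the scaling relations \eqref{scaling} yield
\begin{equation*}
\psi_G = t^{|\gamma|}\psi_\gamma(y)\psi_{G/\gamma}(x') + O(t^{|\gamma|+1}), \qquad \Xi_G = t^{|\gamma|}\psi_\gamma(y)\Xi_{G/\gamma}(x',m,p) + O(t^{|\gamma|+1}),
\end{equation*}
while $\nu_G$ picks up a factor $t^{N_\gamma - 1}$ from the radial change of variables. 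Substituting into \eqref{omega} gives
\begin{equation*}
\omega_G = t^{N_\gamma - 1 - d|\gamma|/2}\, dt \wedge \mu_\gamma(y,x') + O(t^{N_\gamma - d|\gamma|/2}),
\end{equation*}
with $\mu_\gamma$ a bounded, strictly positive differential form on the interior of $F_\gamma$. The infrared conditions enter here: either $m_i>0$ for all $i$, or the genericity assumption \eqref{gem}, guarantees that $\Xi_{G/\gamma}$ stays bounded below by a positive constant on the interior of $F_\gamma$; positivity of $\psi_\gamma$ on the open simplex is automatic.

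For the ``only if'' direction, if $s_\gamma = d|\gamma| - 2N_\gamma \geq 0$ for some $\gamma$, then the radial exponent satisfies $N_\gamma - 1 - d|\gamma|/2 \leq -1$, and the strict positivity of $\mu_\gamma$ forces $I_G$ to diverge already when restricted to any neighbourhood of $F_\gamma$. For the ``if'' direction, the difficulty is that faces $F_\gamma$ meet at corners of $\sigma_G$ where several radial coordinates collapse simultaneously, so no face-by-face estimate is conclusive. I would resolve this via a \emph{Hepp sector decomposition}: writing $\sigma_G = \bigcup_{\pi \in S_N} \Delta_\pi$ with $\Delta_\pi = \{x_{\pi(1)} \leq \cdots \leq x_{\pi(N)}\}$, and introducing Hepp coordinates $t_k = x_{\pi(k)}/x_{\pi(k+1)}$ on the interior of $\Delta_\pi$, each sector is naturally indexed by a complete flag $\emptyset \subsetneq \gamma_1 \subsetneq \cdots \subsetneq \gamma_N = G$, where $\gamma_k$ is spanned by $\{\pi(1),\ldots,\pi(k)\}$. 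Iterating \eqref{scaling} along this flag and tracking the Jacobian produces a monomial factorisation
\begin{equation*}
\omega_G\rest{\Delta_\pi} = \prod_{k=1}^{N-1} t_k^{-1 - s_{\gamma_k}/2}\, dt_k \wedge (\text{bounded}),
\end{equation*}
so the $t_k$-integral near $0$ converges exactly when $s_{\gamma_k} < 0$. Under the hypothesis this holds for every $\gamma_k$, and summing over the finitely many sectors bounds $I_G$.

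The main obstacle is the rigorous justification of the Hepp factorisation. Concretely, one must show that the iterated application of \eqref{scaling} leaves a remainder that is uniformly bounded (and does not vanish in a way that cancels the monomial) throughout the closed sector $\overline{\Delta_\pi}$, not merely to leading order in each $t_k$ separately. This is precisely where the infrared assumption enters in an essential way: without it, zeroes of $\Xi_{G/\gamma_k}$ on the closed sector could reintroduce divergences invisible to the superficial degree of divergence alone.
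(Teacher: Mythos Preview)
The paper does not prove this proposition at all: it is stated as Weinberg's theorem with a bare citation to \cite{weinberg}, and no argument is given. Your proposal therefore goes well beyond what the paper offers.

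Your sketch is the standard and correct route in the parametric representation. The necessity argument by radial scaling near $F_\gamma$ is fine; the exponent $N_\gamma-1-d|\gamma|/2=-1-s_\gamma/2$ is right, and the positivity of $\mu_\gamma$ on the interior of the face follows from the factorisations you quote together with the infrared hypotheses. For sufficiency, the Hepp sector decomposition with ratios $t_k=x_{\pi(k)}/x_{\pi(k+1)}$ is precisely the classical argument, and your monomial exponent $-1-s_{\gamma_k}/2$ is correct (the Jacobian contributes $\prod_j t_j^{\,j-1}$ and $\psi_G$ factors out $\prod_j t_j^{|\gamma_j|}$, combining to give exactly this). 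You have also correctly isolated the one nontrivial analytic step: showing that the residual factor after extracting the monomial is bounded above and away from zero on the closed sector, which is where the massive or generic-momentum assumption is genuinely needed.

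It is worth noting that the paper does carry out essentially the same scaling computation later, in Lemma~\ref{rhosca} and Proposition~\ref{poles}, but organised by flags of \emph{core} subgraphs arising from the compactification $\ti\sigma_G$ rather than by Hepp's edge orderings. The two pictures are closely related (a maximal flag of core subgraphs refines to a Hepp sector after adding the forest edges), so the local formula \eqref{locform} could serve as an alternative packaging of your convergence argument.
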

Thus, a (sub-)graph $\gamma \subset G$ is called \textit{convergent} if $s_\gamma<0$ and \textit{divergent} if $s_\gamma \geq 0$. In the latter case $s_\gamma =0$ is referred to as a \textit{logarithmic (sub-)divergence} and $s_\gamma =1,2, \ldots$ as \textit{linear, quadratic etc.\ (sub-)divergences}.
\newline

The remarkable feature of perturbative quantum field theory and the reason for its success as a physical theory of interacting particles is the fact that, despite being ill-defined, the integrals $I_G$ still carry physical meaningful data. Renormalization is the art of extracting this data in a systematic way. In a nutshell\footnote{We do not want to dwell here on a precise definition of a physical meaningful renormalisation or its philosophical interpretation and refer the reader to the standard literature, e.g.\ \cite{iz}.}: The main approach to renormalise $I_G$ is to regularise the integral by adding a complex parameter $z\in \mb C$ and study $I_G(z)$ as a complex function. This allows to quantify the divergences of $I_G=I_G(z_0)$ in a mathematically sound way as poles in its Laurent expansion around $z_0$. Then one performs a renormalisation operation $\m R$ to render $I_G$ finite, i.e.\ to pass to the \textit{physical limit} $\lim_{z \rightarrow z_0}\m R ( I_G(z) )$. 

There are also methods without using an intermediate regulator, for example by 
\begin{enumerate}
 \item[-] modifying the integration domain $\sigma_G$ in order to shift it away from the singularities of the integrand \cite{bk}.
 \item[-] modifying the integrand $f_G$ in order to get rid of the singularities before integrating \cite{bk-as}. 
\end{enumerate}

The common feature of all of these methods is that they can be formulated as a rescaling of the physical constants in the given theory (in mathematical terms, the renormalisation procedure can be formulated as a special version of Birkhoff decomposition, cf.\ \cite{ck}). 
\newline

We demonstrate the latter approach in the case of at most logarithmic subdivergences. Let $G$ be a connected graph with only logarithmic subdivergences. Denote by $\m D=\{\gamma \subset G \mid s_\gamma =0\}$ the set of divergent subgraphs of $G$ and call $\m F\subset \m D$ \textit{a forest of} $G$ if 
 \begin{equation*}
  \text{ for all } \gamma, \eta \in \m F: \text{ either } \gamma \subset \eta \text{ or } \eta\subset \gamma \text{ or } \gamma \cap \eta = \emptyset.
 \end{equation*}
We want to define for every $\gamma \in \m D$ a subtraction on the integrand which eliminates the corresponding divergence of $f_G$. A naive definition term by term would not work though as one has to take the nestedness and possible overlaps of subdivergences into account. It turns out that forests of $G$ are the appropriate tool to organize this operation. Therefore, we define the \textit{renormalised Feynman integral} by Zimmermann's forest formula \cite{zman}
\begin{equation}\label{forfor}
 I^{\mathit{ren}}_G= \sum_{\text{forests }\m F }(-1)^{|\m F|} \int_{\sigma_G} f_{G,\m F} \nu_G
\end{equation}
where 
\begin{equation*}
f_{G,\m F}=(\psi_{G/\m F} \psi_{\m F})^{-\frac{d}{2}}\log \frac{   \phi_{G/\m F} \psi_{\m F}    +  \phi^0_{ \m F} \psi_{ G /\m F} }{   \phi^0_{G/\m F} \psi_{\m F}    +  \phi^0_{ \m F} \psi_{ G/\m F}  }
\end{equation*}
with $\psi_{\m F} :=  \psi_{  \m F '}$ and $\phi_{\m F} :=  \phi_{\m F'} $ for
\begin{equation*}
\m F' := \bigcup_{\gamma \in \m F} \big( \gamma / \bigcup_{ \eta \in \m F, \eta \subsetneq \gamma }\eta \big)
\end{equation*}
and the superscript $0$ in $\phi$ denotes evaluation at a fixed \textit{renormalisation point} $(p,m)=(p_0,m_0)$. 

For a proof that $I^{\mathit{ren}}_G$ is finite and a derivation of the general forest formula we refer to \cite{bk-as}. In the case of subdivergences of higher degree simple subtractions are not enough to render the integrand finite. One has to combine partial integrations (to reduce the degree of divergence) with subtractions of Taylor polynomials (to get rid of the resulting boundary terms) in order to renormalise the integrand. 
The formulae get considerably more complicated in this case but the overall structure does not change. The upshot is that renormalisation is still organized by the forest formula, and thus by a Hopf algebra, cf.\ \cite{ck} and Theorem $8$ in \cite{bk-as}.

\section{Moduli spaces of graphs}
\subsection{Outer space and moduli spaces of graphs} \label{osmsog}
Let us start with the definition of Outer space, as introduced by Culler and Vogtmann in \cite{cv}. Fix $n\in \mb N$ and call a graph $G$ \textit{admissible} if 
\begin{enumerate}
\item its rank or loop number $|G|=h_1(G)$ equals $n$,
 \item it is $1$PI or \textit{core} or \textit{bridgefree}; deleting an edge reduces its loop number,
 \item all (internal) vertices of $G$ have valence greater or equal to three.
\end{enumerate}

Let $R_n$ denote the rose graph with $n$ petals, i.e.\ the graph consisting of a single vertex and $n$ edges, and consider a space of triples $(G,g,\lambda)$ where $G$ is admissible, $g:G \rightarrow R_n$ a homotopy equivalence (called a \textit{marking}) and $\lambda$ a metric on $G$ that assigns to each $e \in G$ a positive length. Two elements $(G,g,\lambda),(H,h,\eta)$ are considered equivalent iff there is a homothety $\varphi$ between the metric spaces $(G,\lambda)$ and $(H,\eta)$, such that $h\circ \varphi$ is homotopic to $g$. This defines an equivalence relation on the space of all admissible marked metric graphs of rank $n$ and we denote the quotient, called \textit{(Culler-Vogtmann) Outer space}, by $CV_n$. 

There is a natural action of $Aut(F_n)$ on this space. An automorphism $\alpha$ acts on an equivalence class $[(G,g,\lambda)]$ by composing the map $g: G \rightarrow R_n$ with the homotopy equivalence $\ti \alpha: R_n \rightarrow R_n$ that is induced by identifying each (oriented) petal of $R_n$ with a generator of $F_n$. From the above notion of equivalence it follows that inner automorphisms act trivially, so that effectively it reduces to an action of $Out(F_n):=Aut(F_n)/Inn(F_n)$, the group of outer automorphisms of $F_n$.

As a topological space, $CV_n$ decomposes into a disjoint union of open simplices in the following way. For each marked graph $(G,g)$ consider the set of points obtained from changing the metric $\lambda$, i.e.\ by varying the edge lengths subject to the condition of positivity. By the equivalence of scaled metrics we can restrict to the case where each metric $\lambda$ on $G$ satisfies 
\begin{equation*}
vol_{\lambda}(G):= \sum_{e \in G} \lambda(e) =1.
\end{equation*} 
Hence, the space of allowed metrics on $(G,g)$ parametrises the interior of an $(|E(G)|-1)$-dimensional simplex $\Delta_G$. A face of $\Delta_G$ lies in $CV_n$ iff the edge set of $G$ on which $\lambda$ vanishes forms a forest in $G$. Vice versa, missing faces correspond to metrics vanishing on subgraphs $\gamma \subset G$ with $|\gamma|>0$. Elements of these faces are called \textit{points at infinity}. 
\newline

The whole construction naturally generalizes to the case of graphs with $k$ additional basepoints. These basepoints can be thought of as external edges in the sense of Definition \ref{graphs}. In this case one considers labeled graphs $(G,\{v_1,\ldots, v_k\})$, markings become homotopy equivalences $g: (G,\{v_1,\ldots, v_k\}) \rightarrow (R_n,\{v\}) $ and two labeled and marked metric graphs are considered to be equivalent if there is a homothety $\varphi: (G,\{v_1,\ldots, v_k\}) \rightarrow (H,\{w_1,\ldots, w_k\})$ such that $h\circ \varphi \simeq g \text{ rel } \{v_1,\ldots,v_k\}$. The resulting spaces are denoted by $CV_{n,k}$.

For $k=0$ one recovers the definition of Outer space. The case $k=1$ is called \textit{Autre or Auter space}. It allows to study the full automorphism group $Aut(F_n)$  as the existence of a basepoint makes the action of inner automorphism nontrivial. For $k\geq 2$ one obtains spaces equipped with actions of the groups $Out(n,k)\cong F_n^{k-1} \rtimes Aut(F_n)$, see \cite{chkv}. 
\newline

The general idea behind all these constructions is to have nice spaces on which these groups act, allowing to study them using geometric and topological tools. A special role is then played by the corresponding orbit space, the quotient
\begin{equation*}
 MG_{n,k}:= CV_{n,k}/Out(n,k),
\end{equation*}
the \textit{moduli space of rank $n$ metric graphs with $k$ external edges}.

\subsection{A moduli space of coloured graphs}
Unfortunately, the description of $CV_{n,k}$ as disjoint union of open simplices does not quite survive the projection onto $MG_{n,k}$. Indeed, under the quotient operation some open simplices get folded onto themselves. Heuristically speaking, this is due to the fact that without the marking, multi-edges between two vertices become indistinguishable. 

Although both graph polynomials $\psi$ and $\phi$ respect this symmetry as they are invariant under the corresponding permutations of edge-variables, it will be more convenient to work on an intermediate moduli space of coloured graphs. We therefore consider in the following graphs with their internal edges coloured by injective maps $c:E(G) \rightarrow \{1, \ldots, 3(n-1)+k \}$\footnote{An admissible graph of rank $n$ with $k$ legs can have at most $3(n-1)+k$ internal edges.}.
From a physics viewpoint, the colours play the role of placeholders for external data such as particle types and masses (determining the explicit form of the Feynman integrand $f_G$). Mathematically, they serve as fixed coordinates on the edges of $G$, thereby removing the above described symmetry under permutations of multi-edges. Therefore, the resulting moduli space of coloured graphs will behave combinatorially like a finite version of $CV_{n,k}$.  

\begin{defn}
Fix $n,k \in \mb N$ and let $C=C(n,k):=\{1, \ldots, 3(n-1)+k\}$. The moduli space of rank $n$ metric \textit{holocoloured} graphs with $k$ external edges is defined as
 \begin{equation*}
X_{n,k}:= \{ (G,\lambda,c) \mid \lambda: E(G) \longrightarrow \mb R_+,\ c:E(G) \longrightarrow C \}_{/ \sim}  
 \end{equation*}
where $G$ is admissible with $|G|=n$, has $k$ legs and every internal edge is coloured differently using $C$ as set of colours. The equivalence relation $\sim$ is given by $(G,\lambda) \sim (H,\eta)$ if there is a colour-respecting homothety $\varphi:G \rightarrow H$ such that $\lambda = \eta \circ \varphi $.
\end{defn}

\begin{rem}
 In principle one could further restrict the set of admissible graphs by bounding the allowed vertex valency from above. This would produce more missing faces in the resulting moduli spaces. Such spaces make sense for realistic Feynman amplitudes, but for the toy model presented here we simply consider the most general case.  
\end{rem}

The upshot is that $X_{n,k}$ decomposes into a finite disjoint union of open simplices, one for each admissible coloured graph, analogous to the description given for $CV_{n,k}$ in the previous section.
\newline

A convenient bookkeeper for the face relations in $X_{n,k}$ is the set of of all rank $n$ holocoloured graphs with $k$ legs, partially ordered by 
\begin{equation*}
 (G,c) \leq (G',c') \Longleftrightarrow \exists \text{ forest } F \subset G' : G'/ F=G \  \wedge c= c' \rest{E(G')\setminus E(F)}.
\end{equation*}
 Equivalently, it is the set of all open simplices in $X_{n,k}$ partially ordered by face relations. We denote this poset by $\m X_{n,k}$. Its colourless variant (or rather its geometric realization) plays a prominent helpful role in the study of the groups $Out(n,k)$.

\begin{rem}
 The symmetric group $S_{3(n-1)+k}\cong Perm(C) =:\Sigma_C $ acts on $X_{n,k}$ by changing the colours, $\sigma.(G,c):=(G,\sigma\circ c)$, and we retrieve the moduli space of metric graphs $MG_{n,k}$ as the orbit space of this action, $X_{n,k} / \Sigma_C = MG_{n,k}$.
\end{rem}

\subsection{A compactification of $X_{n,k}$}\label{compac}
We describe a compactification of $X_{n,k}$ following the work of \cite{bv} and \cite{v-bord} for Outer space. The construction will not depend on the colouring, so we drop it from the notation temporarily. 
\newline

Faces at infinity in $X_{n,k}$ correspond to degenerate metrics in the following sense. Let 
\begin{equation*}
\dot \Delta_G=\{ (x_1, \ldots, x_N) \mid \sum x_i=1, x_i > 0 \}, \ N=|E(G)|,
\end{equation*}
denote an open simplex in $X_{n,k}$ associated to an admissible coloured graph $G$ of rank $n$ with $k$ legs. In this standard parametrisation every face in the boundary of $\dot \Delta_G$ is described by a set of vanishing coordinates, or equivalently, by a set $S\subset G$ of zero-length edges in $G$. Such a face is thus an element of $X_{n,k}$ iff the graph $G/S$ is still of rank $n$. This is the case iff $S$ is a forest in $G$. We conclude that faces at infinity in $X_{n,k}$ correspond to pairs $(G,\gamma)$ where $G$ is admissible and $\gamma\subset G$ is a subgraph of $G$ with $|\gamma|>0$.

 \begin{example}
 Consider the Dunce's cap graph of Figure \ref{dunce} as an element of $\m X_{2,4}$ (coloured by $\{1,2,3,4\}$). The set of postive metrics of volume one on $G$ describe an open cell, its closure in $X_{2,4}$ is depicted in gray in Figure \ref{cell}. Faces in red correspond to metrics vanishing on subgraphs $\gamma \subset G$ with $|\gamma|>0$, hence lie at infinity in $X_{2,4}$.
    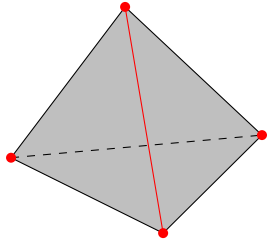
\begin{figure}[!htb]
  \begin{tikzpicture}
   \coordinate (v0) at (0,0);
   \coordinate  (v1) at (2,-1); 
   \coordinate (v2) at (1.5,2); 
   \coordinate (v3) at (3.3,0.3);
   \fill[fill=lightgray] (v0) -- (v1) -- (v2) -- (v0);
   \fill[fill=lightgray] (v1) -- (v2) -- (v3) -- (v1);
   \fill[fill=lightgray] (v0) -- (v1) -- (v2) -- (v3);
   \fill[fill=lightgray] (v0) -- (v1) -- (v2) -- (v3);
   \draw (v0) -- (v1);
   \draw (v0) -- (v2);
   \draw[dashed] (v0) -- (v3);
   \draw (v1) -- (v3);
   \draw (v2) -- (v3);
   \draw[red] (v1) -- (v2);
    \fill[red] (v0) circle (.066cm);
   \fill[red] (v1) circle (.066cm);
   \fill[red] (v2) circle (.066cm);
   \fill[red] (v3) circle (.066cm);
  \end{tikzpicture}
\caption{A cell in $MG_{2,4}$}\label{cell}
 \end{figure}
\end{example}

To construct a compactification of $X_{n,k}$ we proceed simplex by simplex using a method analogous to the Borel-Serre construction for arithmetic groups. From now on denote by $\sigma_G$ always a relatively closed simplex in $X_{n,k}$, i.e.\ $\sigma_G$ is the open simplex $\dot \Delta_G$ together with all of its faces that correspond to graphs $G/F$ where $F\subset G$ is a forest in $G$. 

Consider a point at infinity $x \in \overline{\sigma_G}$ where a subset of edge variables $S\subset G$ vanishes. We can restrict our attention to the case where $S=G_1$ is a $1$PI or core subgraph of $G$ - setting the remaining edge lengths in $S \setminus G_1$ to zero describes a face of $\sigma_{G_1}$ (which does not lie at infinity). 

Possible directions of approaching $x$ correspond to flags of subgraphs of $G$ in the following way. The set of metrics on $G_1$ defines, after rescaling, a new simplex $\sigma_{G_1}$. If a metric vanishes on another core subgraph $G_2 \subset G_1$, we can repeat this construction to obtain a simplex $\sigma_{G_2}$, and so on. This process ends after a finite number of steps since the loop number of the graphs considered must decrease in every step, $|G_i|>|G_{i+1}|$. 
A point at infinity in $X_{n,k}$ can thus be described by a finite sequence of core subgraphs, a \textit{flag} $G=G_0\supset G_1 \supset G_2 \supset \ldots \supset G_m$, each equipped with a metric on its edges, normalized to volume one.  

For any core subgraph $\gamma \subset G$ there is a projection map $r_\gamma: \sigma_G \rightarrow \sigma_\gamma$. It is defined by restricting a metric on $G$ to $\gamma$ and rescaling it to volume one, thereby defining a point in $\sigma_\gamma$. The product of these maps forms a composite map
\begin{equation*}
 r: \sigma_G \longrightarrow \prod_{\gamma \subset G \text{ core}} \overline{\sigma_{\gamma}}
\end{equation*}
which is an embedding (here $G$ is counted as a core subgraph of itself). The \textit{compactified cell} $\ti \sigma_G$ is defined as the closure of the image of $r$,
\begin{equation*}
 \ti \sigma_G := \overline{ r(\sigma_G)}.
\end{equation*}

\textit{Alternative description of $\ti \sigma_G$ (cf.\ \cite{bek,bk,bk-as})}: Another way of parametrising the standard $n$-dimensional simplex is to describe it as subset of $n$-dimensional real projective space,
\begin{equation*}
 \Delta^n= \{ [x_0: x_1: \ldots :x_n] \mid x_i\geq 0 \} \subset \mb P(\mb R^{n+1}).
\end{equation*}
In this projective setup let $\mb P_G=\mb P(\mb R^N)$. Then we can rephrase the previous discussion as follows. The compactified cell $ \ti \sigma_G$ is the subset of $\prod_{\gamma} \mb P_{\gamma}$ obtained from $\overline{\sigma_G}$ by a sequence of blowups along the (strict transforms of) subspaces 
\begin{equation*}
 L_{\gamma}= \{ x_e= 0 \mid e \in \gamma \} \subset \overline{\sigma_G}=\{ [x_1: \ldots :x_N] \mid x_i\geq 0 \}
\end{equation*}
where each $\gamma$ is a proper core subgraph of $G$. The sequence of blowups proceeds along subspaces of increasing dimension, so it is determined by the inclusion relation on subgraphs whereas for disjoint subgraphs the order does not matter. We recover thus the above description of points at infinity by flags of core subgraphs of $G$.

\begin{prop}
 Both constructions are equivalent, i.e.\ for every admissible graph $G$ both compactified cells are isomorphic (as smooth varieties).
\end{prop}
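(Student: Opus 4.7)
The plan is to identify both descriptions with the \emph{wonderful model} of the arrangement $\m L := \{L_\gamma : \gamma \subsetneq G \text{ proper core subgraph}\}$ inside $\ol{\sigma_G}\subset \mb P_G$, in the spirit of De Concini--Procesi (and, in the graph setting, of \cite{bek}). Unpacking the first description, $r$ is the product of the rational projections $r_\gamma : \ol{\sigma_G}\dashrightarrow \ol{\sigma_\gamma}$, $[x_1:\ldots:x_N]\mapsto [x_e]_{e\in E(\gamma)}$, whose indeterminacy locus is exactly $L_\gamma$. Hence $\ti \sigma_G = \ol{r(\sigma_G)}$ is by construction a common resolution of all the $r_\gamma$'s, and since such a linear projection is resolved by a single blow-up of its base locus, $\ti \sigma_G$ is dominated by the iterated blow-up of $\ol{\sigma_G}$ along the $L_\gamma$.

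Next I would verify that $\m L$ is a building set. The crucial combinatorial input is closure under intersection: for core $\gamma,\delta \subset G$ one has $L_\gamma\cap L_\delta = L_{\gamma\cup\delta}$, and $\gamma\cup\delta$ is again core, since any edge $e\in \gamma$ cannot become a bridge of $\gamma\cup\delta$, as it already lies on a cycle within $\gamma$. Thus inclusion on core subgraphs corresponds to reverse inclusion on $\m L$, and disjoint core subgraphs correspond to transverse members of $\m L$, so the arrangement satisfies the building-set axioms.

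With these ingredients I would invoke the standard theorem on wonderful compactifications: for a building set $\m L$ in a smooth variety $X$, the closure of the graph of the product of quotient projections $X \dashrightarrow \prod_{L\in\m L} X/L$ is canonically isomorphic, as a smooth variety, to the iterated blow-up of $X$ along strict transforms of elements of $\m L$, performed in any order refining inclusion (smallest-dimensional first). Both sides are then stratified by nested subsets of $\m L$, which in our setting correspond precisely to the flags $G = G_0 \supset G_1 \supset \ldots \supset G_m$ of core subgraphs that parametrise points at infinity in the first description.

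The main obstacle is the combinatorial verification that the family of core subgraphs satisfies the building-set axioms in a way compatible with successive passage to strict transforms under the blow-ups, and that the two natural orderings (smallest-to-largest blow-ups versus disjoint-first in the graph closure) give the same smooth model. Once this is in place, smoothness and the isomorphism follow directly from the general wonderful-compactification machinery, and the matching of boundary strata reduces to routine bookkeeping indexed by flags of core subgraphs.
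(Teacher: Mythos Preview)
Your proposal is correct and follows essentially the same route as the paper: both identify $\ti\sigma_G$ with the De~Concini--Procesi wonderful model of the arrangement $\{L_\gamma\}$ and then cite the equivalence of the graph-closure and iterated-blowup descriptions from \cite{dp}. The paper states directly that one is dealing with the \emph{maximal} building set; your closure-under-intersection argument (unions of core subgraphs are core, hence $L_\gamma\cap L_\delta = L_{\gamma\cup\delta}$ lies again in $\m L$) is exactly what shows $\m L$ coincides with its own intersection lattice, so that it \emph{is} the maximal building set---and this is why nested sets reduce to chains, i.e.\ flags, making your remark about disjoint/transverse pairs unnecessary.
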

\begin{proof}
 The projective simplex $\Delta^n_p$ is isomorphic to the standard one $\Delta_s^n$ via the regular map
\begin{equation*}
 \zeta: \Delta_p^n \longrightarrow \Delta^n_s, \quad [x_0 : \ldots : x_n] \longmapsto \frac{1}{x_0+ \ldots + x_n}(x_0, \ldots, x_n).
\end{equation*}
Under this map the family $\{L_\gamma \mid \gamma \subset G \text{ core} \}$ transforms into a linear subspace arrangement in $\mb R^{N_G-1}$. The compactified cell $\ti \sigma_G$ is a wonderful model for this arrangement in the sense of DeConcini-Procesi \cite{dp}. More precisely, it is the wonderful model for the maximal building set $\m B=\{\zeta(L_\gamma) \cap \Delta_s^{N_G-1} \mid \gamma \subset G \text{ core}\}$. The results in \cite{dp} show that both descriptions of $\ti \sigma_G$ are equivalent. Moreover, the construction through a sequence of blowups provides local coordinates\footnote{See Section \ref{reonco}.} on this wonderful model using the notion of nested sets which here are given by totally ordered subsets of $\m B$, hence by flags of core subgraphs of $G$. 
\end{proof}

By construction the projection map $\beta: \ti \sigma_G \rightarrow \overline{\sigma_G}$ is an isomorphism outside of the exceptional divisor 
\begin{equation*}
 \m E=\m E_G:=\beta^{-1}( \cup_{\gamma} L_{\gamma} ) = \cup_\gamma \m E_\gamma, \quad  \m E_\gamma:=  \beta^{-1}(L_{\gamma})\cong \mb P_\gamma \times \mb P_{G/\gamma},
\end{equation*}
with its inverse given by the map $r$. Therefore it makes sense to call the elements in $\m E_\gamma \subset \m E_{G}$ the \textit{new faces} of $\ti \sigma_G$. In a graphical notation that will be useful later we write for a new face $\tau \subset \ti \sigma_G$, corresponding to the blowup of a $L_{\gamma}$ or to an intersection of multiple such faces,
\begin{equation}\label{newface}
 \tau \sim (G,\m F) \sim (G=G_0\supset G_1 \supset G_2 \supset \ldots \supset G_m),
\end{equation}
where $\m F$ is a flag $G_1 \supset G_2 \supset \ldots \supset G_m$ of core subgraphs of $G$. If $m=1$, then the pair $(G,\m F)=(G,G_1)$ describes a maximal new face of $\ti \sigma_G$; if $m>1$, then $(G,\m F)$ describes the intersection of the faces $(G,G_i)$ for $i = 1, \ldots, m$. 

All other faces of $\ti \sigma_G$ have a description that is induced by the face relation in $X_{n,k}$.

\begin{lem}\label{boundary}
 If $G'=G/F$ for a forest $F \subset G$, then $\ti \sigma_{G'}$ is a face of $\ti \sigma_{G}$.
 \end{lem}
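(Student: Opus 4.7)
The plan is to realise $\ti\sigma_{G'}$ as the strict transform of $\overline{\sigma_{G'}}$ under the blowdown map $\beta\colon\ti\sigma_G\to\overline{\sigma_G}$. Since $F$ is a forest, $|G/F|=|G|=n$, and contracting $F$ amounts to setting $x_e=0$ for all $e\in F$; hence $\overline{\sigma_{G'}}$ already sits inside $\overline{\sigma_G}$ as the honest coordinate face $\{x_e=0:e\in F\}$, which lies in $X_{n,k}$ rather than at infinity. The task is to lift this inclusion through the compactification procedure.

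The key combinatorial step is to compare the two building sets. For a core subgraph $\gamma\subset G$, the intersection $F\cap\gamma$ is a forest in $\gamma$, so $\gamma/(F\cap\gamma)$ has the same first Betti number as $\gamma$ and is again core; under the projection $G\to G/F$ it yields a core subgraph $\gamma'\subset G'$. This assignment is surjective, and crucially
\begin{equation*}
L_\gamma\cap\overline{\sigma_{G'}}=L_{\gamma'}\subset\overline{\sigma_{G'}},
\end{equation*}
since both sides describe the vanishing of the edge variables of $\gamma\cup F$. Consequently the arrangement induced on the coordinate face $\overline{\sigma_{G'}}$ by $\{L_\gamma:\gamma\subset G\text{ core}\}$ is precisely $\{L_{\gamma'}:\gamma'\subset G'\text{ core}\}$, the arrangement that defines $\ti\sigma_{G'}$.

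With this dictionary in hand, I would invoke the compatibility of DeConcini--Procesi wonderful models with restriction to coordinate subspaces: performing the sequence of blowups of $\overline{\sigma_G}$ along (strict transforms of) the $L_\gamma$ in order of increasing dimension, and then restricting to the strict transform of $\overline{\sigma_{G'}}$, yields exactly the wonderful model of the induced arrangement, namely $\ti\sigma_{G'}$. This identifies $\ti\sigma_{G'}$ as a closed subvariety of $\ti\sigma_G$, realised as the strict transform of $\overline{\sigma_{G'}}$ and thus as a face in the natural stratification coming from the nested-set description.

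The main obstacle is the bookkeeping: distinct core subgraphs $\gamma_1,\gamma_2\subset G$ whose symmetric difference is contained in $F$ descend to the same $\gamma'\subset G'$ and produce the same $L_{\gamma'}$ on $\overline{\sigma_{G'}}$. One must verify that, after the first blowup along $L_{\gamma'}$, the later blowups associated to such duplicates restrict to isomorphisms on the strict transform of $\overline{\sigma_{G'}}$; equivalently, that the restricted family is still a building set. Both assertions follow from the nested-set characterisation in \cite{dp}, using that $F$ being a forest ensures the correspondence $\gamma\mapsto \gamma/(F\cap\gamma)$ preserves core-subgraph intersections and codimensions inside the coordinate subspace $\overline{\sigma_{G'}}$.
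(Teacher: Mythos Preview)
Your argument is correct and rests on the same combinatorial observation the paper uses, namely that if $\gamma\subset G$ is core then $\gamma/(F\cap\gamma)\subset G'=G/F$ is core. The difference is in which of the two equivalent descriptions of $\ti\sigma_G$ is deployed. You work with the blowup (wonderful model) description: $\overline{\sigma_{G'}}$ is a coordinate face of $\overline{\sigma_G}$, the arrangement $\{L_\gamma\}$ restricts to $\{L_{\gamma'}\}$, and the strict transform of $\overline{\sigma_{G'}}$ is then the wonderful model $\ti\sigma_{G'}$, hence a face. The paper instead uses the embedding $r:\sigma_G\hookrightarrow\prod_{\gamma\text{ core}}\overline{\sigma_\gamma}$: the face projection $\pi:\sigma_G\to\sigma_{G'}$ is compatible with the family of maps $r_\gamma$ precisely because of the core-subgraph property, so $\pi$ lifts to a map $\ti\pi:\ti\sigma_G\to\ti\sigma_{G'}$ identifying the latter with a face of the former. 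Your route is more explicit about the stratified geometry and handles the redundancy (distinct $\gamma$'s with the same $\gamma'$) directly; the paper's route is shorter but leaves the identification of the face implicit in the commuting square. Either way the substance is the single implication $\gamma$ core $\Rightarrow\gamma/(F\cap\gamma)$ core.
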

 
 \begin{proof}
 The face relation in $X_{n,k}$ via contraction of forests $F\subset G$ defines a map $\pi: \sigma_{G} \rightarrow \sigma_{G'}$. Because of the property 
\begin{equation*}
\gamma \subset G \text{ is core } \Longrightarrow  \gamma/ (F\cap \gamma) \subset G' \text{ is core, } 
\end{equation*}
$\pi$ lifts to a map $\ti \pi: \ti \sigma_{G} \rightarrow \ti \sigma_{G'}$ such that
\[
  \begin{tikzcd}
  \ti  \sigma_{G} \arrow{r}{\ti \pi} \arrow[swap]{d}{\beta} & \ti \sigma_{G'} \arrow{d}{\beta'} \\
    \sigma_{G} \arrow{r}{\pi}  & \sigma_{G'} \
  \end{tikzcd}
 \] commutes. 
 \end{proof}

\begin{lem}\label{vertex}
 A vertex $\nu$ of $\ti \sigma_G$ is described by $\nu \sim (G,T,e_0,\ldots,e_{n-1})$ where $T$ is a spanning tree $T\subset G$ and $(e_1,\ldots,e_n)$ an ordering of the edges in $G \setminus T$.
\end{lem}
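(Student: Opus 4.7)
The plan is to classify the zero-dimensional strata of $\ti\sigma_G$ by combining the face description (\ref{newface}) and Lemma \ref{boundary} with a direct dimension count, and then to unpack what such a maximal stratum looks like. First I would observe that every face of $\ti\sigma_G$ is labelled by a pair $(G/F,\m F)$: Lemma \ref{boundary} identifies the faces coming from forest contractions $G \to G/F$ with compactified cells $\ti\sigma_{G/F}$, and within each such cell the new faces at infinity are indexed, via (\ref{newface}), by flags $\m F=(G/F=G_0\supset G_1\supset \ldots \supset G_m)$ of core subgraphs of $G/F$ with strictly decreasing loop numbers.

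Second, I would compute the dimension of the stratum labelled by $(G/F,\m F)$. Iterating the exceptional divisor decomposition $\m E_\gamma \cong \mb P_\gamma \times \mb P_{G/\gamma}$ through the sequence of blowups, or equivalently reading off the local coordinates from the nested-set picture of \cite{dp}, gives
\begin{equation*}
\dim(G/F,\m F)=|E(G/F)|-m-1.
\end{equation*}
Setting this equal to $0$ forces $|E(G/F)|=m+1$. Two basic inequalities now pin down the data. Contracting a forest preserves the loop number, so $|G/F|=n$, and since the $|G_i|$ form a strictly decreasing sequence of positive integers one has $m+1\leq n$. Conversely, any connected $1$PI graph of rank $n$ has at least $n$ edges (since $|E|=n+|V|-1\geq n$), with equality iff it has a single vertex, i.e.\ iff it is the rose $R_n$. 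Hence $|E(G/F)|\geq n$, and combining the two forces $m+1=|E(G/F)|=n$, so $G/F\cong R_n$ (with the $k$ legs at the single vertex) and $F$ must be a spanning tree $T\subset G$.

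Finally I would identify the maximal flags inside $R_n$. Every core subgraph of a rose is itself a sub-rose cut out by a subset of its petals, so a chain $R_n=G_0\supset G_1\supset \ldots \supset G_{n-1}$ with loop numbers $n,n-1,\ldots,1$ amounts to removing a single petal in each step and keeping the last remaining petal as $G_{n-1}$, equivalently to a total ordering of the $n$ petals of $R_n$. Pulling this ordering back along $G\to G/T=R_n$ identifies the petals with the $n$ edges of $G\setminus T$, yielding the desired tuple. The only genuinely delicate ingredient is the dimension identity $|E(G/F)|-m-1$; once the wonderful-model / iterated-blowup picture preceding this lemma is in hand, the rest is combinatorial bookkeeping, and I would expect the hardest thing to get right to be the verification that the strict drop $|G_{i+1}|<|G_i|$ in loop number is indeed both necessary and sufficient for a flag to contribute a new stratum.
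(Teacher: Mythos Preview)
Your argument is correct and follows essentially the same route as the paper's proof: reduce via Lemma \ref{boundary} to a forest contraction $G\to G/F$, then use (\ref{newface}) to pass to maximal flags in the resulting rose $R_n$, which are orderings of its petals. The paper's proof simply states that contracting a spanning tree yields a face of maximal codimension and that the vertices of $\ti\sigma_{R_n}$ are orderings of the $n$ petals; your explicit dimension count $|E(G/F)|-m-1=0$ together with the inequalities $m+1\le n\le |E(G/F)|$ makes precise the step the paper leaves implicit, namely that \emph{every} vertex is obtained this way.
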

\begin{proof}
 Lemma \ref{boundary} shows that the contraction of a spanning tree $T\subset G$ defines a facet of $\ti \sigma_G$ of maximal codimension, parametrised by the edge variables of $G/ T$ which is a rose graph with $n$ petals. By \eqref{newface} the vertices of $\ti \sigma_{G/T}$ are given by maximal flags of core subgraphs of $G/T$ which can be represented by orderings of the $n$ petals of $G/T$.
\end{proof}

\begin{cor}
 Each cell $\ti \sigma_G$ is the convex hull of its vertices, i.e.\ a polytope.
\end{cor}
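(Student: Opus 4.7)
The plan is to induct on the number of blowups in the wonderful-model description of $\ti \sigma_G$, showing that at each step the result is a polytope, and then to read off the vertex set from Lemma \ref{vertex}. Since a polytope is by definition the convex hull of its finite vertex set, the corollary follows as soon as one knows $\ti \sigma_G$ is a polytope and that its $0$-strata are the ones described in the lemma.

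I would start from the observation that $\overline{\sigma_G}$ is already a polytope, namely the standard closed $(N-1)$-simplex in $\mb P(\mb R^N)$, $N=|E(G)|$, with vertices $[0:\cdots:1:\cdots:0]$ indexed by $e\in E(G)$. Then, ordering the proper core subgraphs $\gamma_1,\ldots,\gamma_r \subset G$ by non-decreasing $\dim L_{\gamma_i}$ (equivalently, non-increasing $|E(\gamma_i)|$), I would construct polytopes $P_0:=\overline{\sigma_G}, P_1, \ldots, P_r$ recursively, where $P_i$ is the truncation of $P_{i-1}$ along the face $F_{\gamma_i} := L_{\gamma_i} \cap P_{i-1}$. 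Concretely, $P_i := P_{i-1} \cap \{\ell_{\gamma_i} \geq \epsilon_i\}$ for the linear form $\ell_{\gamma_i} := \sum_{e \in E(\gamma_i)} x_e$ and $\epsilon_i>0$ small enough that $P_i$ meets exactly those faces of $P_{i-1}$ not contained in $F_{\gamma_i}$. Each $P_i$ is a polytope as an intersection of a polytope with a closed half-space, so $P_r$ is a polytope by induction.

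Next I would invoke the DeConcini-Procesi picture of the preceding Proposition to identify $P_r$, on the positive real locus, with $\ti \sigma_G$: the new facet $\{\ell_{\gamma_i} = \epsilon_i\} \cap P_{i-1}$ plays the combinatorial role of the exceptional component $\m E_{\gamma_i} \cong \mb P_{\gamma_i} \times \mb P_{G/\gamma_i}$ introduced by the blowup of $L_{\gamma_i}$ at step $i$, whose positive real part is precisely a product of two lower-dimensional simplices. Hence $\ti \sigma_G$ is a polytope, and Lemma \ref{vertex} gives its vertex set as the tuples $(G,T,e_0,\ldots,e_{n-1})$, which is what had to be shown.

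The main technical point will be justifying that the prescribed half-space truncation really does model the algebraic blowup on the positive real locus. This is what the increasing-dimension ordering buys us: at every stage the centre (or its strict transform) sits in the boundary of $P_{i-1}$ as a genuine proper face, so the local model of the blowup is the replacement of a simplicial corner by a prism, i.e.\ exactly a half-space cut. Formally this is the content of DeConcini--Procesi's construction for a building set of a linear subspace arrangement restricted to a simplex, which is the setup underlying the alternative description of $\ti \sigma_G$ in the previous subsection.
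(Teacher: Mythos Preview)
The paper states this corollary without proof, treating it as immediate from the preceding lemmas and the wonderful-model description. Your argument via iterated half-space truncation is correct and supplies the content the paper leaves implicit: it is the standard way to realise the non-negative real locus of a wonderful model for a linear arrangement as a polytope (the resulting polytopes are nestohedra in the sense of Postnikov and Feichtner--Sturmfels). The point you flag at the end---that cutting off a face of a polytope by a nearby hyperplane models, combinatorially, the real blowup along that face---is indeed the crux, and it works precisely because each $L_\gamma$ meets $\overline{\sigma_G}$ in a genuine face and the increasing-dimension order guarantees that at every step the (strict transform of the) centre is still a face of the current polytope.

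One small caveat worth making explicit: the polytope $P_r$ you construct depends on the choice of the $\epsilon_i$, and the map $r$ embedding $\sigma_G$ into $\prod_\gamma \overline{\sigma_\gamma}$ is rational rather than affine, so what your argument really establishes is a combinatorial (face-lattice) isomorphism $P_r \cong \ti\sigma_G$, i.e.\ that $\ti\sigma_G$ is a polytope up to stratified homeomorphism. That is exactly what is needed here, and it is presumably also what the corollary intends by ``convex hull of its vertices''.
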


Finally, we define the compactification of $X_{n,k}$ as the result of gluing together all the cells $\ti \sigma_{(G,c)}$ along their common boundaries.
\begin{defn}
 The compactified moduli space of admissible rank $n$ metric holocoloured graphs with $k$ external edges is 
 \begin{equation*}
  \ti X_{n,k}:= \big( \cupdot_{(G,c) \in \m X_{n,k} } \ti \sigma_{(G,c)} \big)_{/ \sim },
 \end{equation*}
where the relation $\sim$ is induced by the face relation on $X_{n,k}$ through the maps $\ti \pi$ constructed in Lemma \ref{boundary}.
\end{defn}

\begin{example}
Figure \ref{compcell} shows the compactified cell $\ti \sigma_G$ for the Dunce's cap graph. The dark gray faces are the results of blowing up $\Delta_G$ first along the $0$-faces $\{x_i=1\}$ for $i =3,4$, and then along the $1$-face $\{x_3=x_4=0\}$ corresponding to the core subgraph $\gamma \subset G$ formed by edges $3$ and $4$.
  
  \begin{figure}[!htb]
  \begin{tikzpicture}
   \coordinate (v0) at (0,0);
   \coordinate (v01) at (0,-0.6);
   \coordinate (v02) at (0.4,-0.25);
   
   \coordinate  (v1) at (1.5,1.6); 
    \coordinate (v11) at (2,1.66);
      
   \coordinate (v2) at (1.6,-1.1);
   \coordinate (v21) at (2.4,-1);
      
   \coordinate (v3) at (3,0);
   \coordinate (v31) at (3.3,0.2);
   \coordinate (v32) at (3.3,-0.2);
   
    \fill[fill=lightgray] (v0) -- (v01) -- (v2) -- (v21) -- (v11) -- (v1) -- (v0);
   \fill[fill=lightgray] (v2) -- (v21) -- (v32) -- (v31) -- (v11);
 
   \fill[fill=gray,opacity=0.66] (v0) -- (v01) -- (v02) -- (v0);
   \fill[fill=gray,opacity=0.66] (v3) -- (v31) -- (v32) -- (v3);
   \draw[dashed] (v0) -- (v02);
   \draw[dashed] (v01) -- (v02);
   \draw[dashed] (v32) -- (v3);
   \draw[dashed] (v31) -- (v3);
   
   \fill[fill=gray] (v1) -- (v11) -- (v21) -- (v2) -- (v1);
   
   \draw (v0) -- (v01);
   \draw (v01) -- (v2);
   \draw (v2) -- (v21);
   \draw (v21) -- (v11);
   \draw (v0) -- (v1);
   \draw (v1) -- (v11);
   \draw (v11) -- (v31);
   \draw (v31) -- (v32);
   \draw (v32) -- (v21);
   \draw (v21) -- (v2); 
   \draw (v2)--(v1);
   \draw[dashed] (v02) -- (v3);
  \end{tikzpicture}
\caption{A compactified cell in $\ti X_{2,4}$}\label{compcell}
 \end{figure}
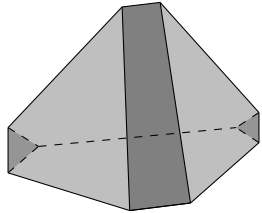
\end{example}

\section{Pseudo complexes and piecewise distributions}\label{pscpwd}
This section introduces some notions necessary to formulate and renormalise Feynman amplitudes on moduli spaces of graphs. For an introduction to simplicial, $\Delta$- and CW-complexes, and a detailed discussion of the differences between these notions, see \cite{ah}. For an introduction to distributions see \cite{gs} or \cite{ho}.
\newline

As discussed in Section \ref{compac}, the spaces $X_{n,k}$ are not real $\Delta$-complexes (also known as \textit{semi-simplicial complexes}), but have missing faces. Let us call such spaces pseudo $\Delta$-complexes.
\begin{defn}
A topological space $K$ is a pseudo $\Delta$-complex iff $K = L \setminus F $ for a finite $\Delta$-complex $L$ and a $F$ a subcomplex of $L$. Face relations in $K$ are then naturally inherited from $L$. Equivalently, we say $K$ is pseudo iff it is the disjoint union of finitely many open simplices modulo face relations.

We call the elements of $K$ pseudo simplices, i.e.\ $\sigma$ is a pseudo simplex in $K$ if there is $\tilde \sigma \in L$ such that $\sigma= \tilde \sigma \setminus (\cup_{\tau \in F} \tilde \sigma \cap \tau)$. 
\end{defn}

\begin{example} All spaces $X_{n,k}$ with $n\geq 2$ are pseudo simplicial complexes - for instance, all $0$-simplices are missing - whereas all $X_{1,k}$ are ordinary $\Delta$-complexes - every edge contraction transforming $G\in \m X_{1,k}$ to a coloured rose is allowed.
\end{example}

\begin{rem}
In the simplicial category pairs $(L,F)$ are known as \textit{relative simplical complexes}, introduced by Stanley in \cite{sta87}. Since we are dealing with more general spaces, we will avoid confusion (for instance with the concept of \textit{relative CW complexes}) by sticking to the term \textit{pseudo}.
\end{rem}

Every pseudo simplex is locally just a manifold with corners. Therefore, differential forms and integration can be defined on these objects\footnote{See \cite{grmo} for applications of differential forms on simplicial complexes.}. Moreover, simplices are orientable so that we have a natural identification of distribution densities and volume forms \cite{ni}. This allows to define distributions on (pseudo) complexes. To formulate amplitudes as evaluations of distributions over the spaces $X_{n,k}$ we need to take into account contributions from all of their pieces. Therefore, we have to integrate over lower dimensional simplices as well, contributions that are not taken care of by the usual theory of integration. To cope with this anomaly, we change the definition of a distribution slightly from the usual one. 

\begin{defn}
 A piecewise distribution on a (pseudo) complex $K$ is a collection $u=\{ u_{\sigma} \mid \sigma \in K \}$ of distributions, one for each of its (pseudo) simplices. The value of $u$ at a test function $\varphi \in C_c^{\infty}(K)$, denoted by $\langle u \mid \varphi \rangle$, is given by the sum over all its single contributions
 \begin{equation*}
  \langle u \mid \varphi \rangle := \sum_{\sigma \in K}\langle u_{\sigma} \mid \varphi \rest \sigma \rangle.
 \end{equation*}
 
 A piecewise distribution $u$ is said to respect face relations if the following holds. If $\tau$ is a face of $\sigma \in K$, then $u_{\tau}= {u_{\sigma}}_{|\tau}$, where the restriction of a (regular) distribution $u$ to a submanifold $S$ is defined by (cf.\ \cite{gs})
 \begin{equation}\label{restr}
 S \overset{loc.}=\{x_1,\ldots, x_k = 0 \} \Longrightarrow \langle u\rest S \mid \varphi \rangle = \int_S \varphi(x)u(x)\prod_{i=1}^k\delta(x_i) dx_1 \cdots dx_n.
 \end{equation}
\end{defn}

\begin{rem}
\begin{enumerate}
\item A piecewise distribution $u$ on a $\Delta$-complex $K$ of dimension $d$ defines cochains $u^i\in C^i(K,\mb R)$ for $i=0, \ldots, d$ by
\begin{equation*}
\sigma=\sum_j a_j \sigma_j \longmapsto u^i(\sigma):=\sum_j a_j \langle u_{\sigma_j} \mid \chi \rest{\sigma_j}  \rangle, \text{ with } \chi \equiv 1 \in C_c^{\infty}(K).
\end{equation*}
If $K$ is pure and $u$ respects face relations, then it is completely determined by its values on the facets of $K$. In this case we have a sequence 
\begin{equation*}
C^{d}(K,\mb R) \ni u^{d} \longrightarrow u^{d-1} \longrightarrow \ldots \longrightarrow u^{0} \in C^{0}(K,\mb R)
\end{equation*}
that is induced by the restriction map and carries essentially the same information as the coboundary operation on $C^*(K,\mb R)$.
\item The definition of piecewise distributions also works for more general spaces, such as polytopal or CW-complexes and even stratified spaces. The only important property needed is a notion of integration on each building block compatible with the corresponding boundary or face relations. In every such setting distributions, and even differential forms and currents, can be defined as above. 
\end{enumerate}
\end{rem}

\section{Feynman amplitudes as piecewise distributions}\label{faapwd}
From now on let $n,k$ be fixed. To minimize notation we denote by $X=X_{n,k}$ the moduli space of metric holocoloured graphs of rank $n$ with $k$ legs. Furthermore, in the following we write simply $G$ for a coloured graph $(G,c)\in \m X$. 
\newline

Assume given a set of masses $m_c$, one for each colour $c\in C=\{1,\ldots, 3(n-1)+k \}$, and let $p\in (\mb R^{d})^k$ denote a generic external momentum configuration. Inspecting the Feynman integrand $f_G$ we conclude from \eqref{edgeco} that the graph polynomials respect face relations, i.e.\ for each coloured edge $e \in E(G)$ we have
\begin{equation*}
 \psi_{G / e}  = \psi \rest {x_e=0} , \quad  \phi_{G / e}  = \phi \rest{x_e=0}, \quad \Xi_{G/e}=\Xi \rest {x_e=0}.
\end{equation*}
On the other hand, $f_G$ depends also on $s_G$, the superficial degree of divergence of $G$ through the exponent of $\psi_G/\Xi_G$. It is a discontinuous function on $\sigma_G$, given by
\begin{equation*}
s_G=d|G|-2N = dn - 2\big( \sum_{e \in E(G)} 2\theta(x_e)-1 \big) 
\end{equation*} 
with $\theta$ the Heaviside distribution
\begin{equation*}
\theta(x)=\begin{cases}0 & \text{ if } x<0 \\
           \frac{1}{2} & \text{ if } x=0 \\
           1 & \text{ else. }
          \end{cases}
\end{equation*}
Therefore, $f_G$ respects face relations, although in a discontinuous way, so that we are naturally led to work within the class of piecewise distributions on $X$.
\begin{defn}
The Feynman piecewise distribution on $X$ is defined as the collection 
\begin{equation*}
 t=t(p)=\{t_G \mid G \in \m X\}, \quad t_G: C_c^{\infty}(\sigma_G) \longrightarrow \mb C,\ \varphi \longmapsto \langle t_G \mid \varphi \rangle:= \int_{\sigma_G}\varphi \omega_G,
\end{equation*}
with $\omega_G=f_G(p,c)\nu_G$ as defined in \eqref{omega}. 
\end{defn}
This definition is justified by the following
\begin{prop}
 $t$ is a piecewise distribution on $C_c^{\infty}(X)$ that respects face relations.
\end{prop}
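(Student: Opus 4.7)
The plan is to verify two claims separately: first, that each $t_G$ is a well-defined regular distribution on $\sigma_G$; second, that the family $\{t_G\}$ respects face relations in the sense of \eqref{restr}.

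For the first claim, under the standing assumptions (positive masses or generic external momenta \eqref{gem}), both $\psi_G$ and $\Xi_G$ are strictly positive on the open cell $\dot\sigma_G$. Hence $f_G$ is smooth there and locally integrable against compactly supported test functions, so $t_G$ is a bona fide regular distribution.

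For the second claim, the faces of $\sigma_G$ inside $X$ are precisely the $\sigma_{G/F}$ for forests $F \subset G$, realised in $\overline{\sigma_G}$ as coordinate subspaces $\{x_e=0 \mid e\in F\}$. Since the restriction formula \eqref{restr} factorises through successive coordinates, I reduce by induction on $|F|$ to a single edge contraction $G'=G/e$, where $e$ is necessarily a non-loop edge (forests contain no self-loops). The identity to establish becomes
\begin{equation*}
\langle t_G|_{\sigma_{G/e}} \mid \varphi\rangle \;=\; \int_{\sigma_{G/e}} \varphi\cdot (f_G\nu_G)\big|_{x_e=0} \;=\; \langle t_{G/e}\mid\varphi\rangle, \qquad \varphi\in C_c^\infty(\sigma_{G/e}).
\end{equation*}
Proposition \ref{condel} gives $\psi_G|_{x_e=0}=\psi_{G/e}$ and $\Xi_G|_{x_e=0}=\Xi_{G/e}$, while a Leray-type identity yields $\nu_G = dx_e\wedge\nu_{G/e}$ modulo $x_e$, matching the underlying volume forms after integrating against $\delta(x_e)dx_e$. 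The delicate point is the exponent $N-|G|\tfrac{d}{2}=-s_G/2$: on the open cell $\dot\sigma_G$ it equals $N_G-|G|\tfrac{d}{2}$, but on the face $\{x_e=0\}$ the Heaviside convention $\theta(0)=\tfrac12$ forces a jump to $(N_G-1)-|G|\tfrac{d}{2}=N_{G/e}-|G/e|\tfrac{d}{2}$ (using $|G/e|=|G|$), which is precisely the exponent appearing in $f_{G/e}$. This jump absorbs what would otherwise be a spurious factor of $\psi_{G/e}/\Xi_{G/e}$, yielding $f_G|_{x_e=0}=f_{G/e}$ and hence the claimed equality.

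The main obstacle is controlling this discontinuous exponent: without the Heaviside convention the naive restriction fails to reproduce $f_{G/e}$ and the face relation breaks by a factor $\psi/\Xi$. Once the single-edge case is handled cleanly, iteration over the edges of a general forest $F$ is automatic, and the full claim then follows directly from the definition of a piecewise distribution respecting face relations.
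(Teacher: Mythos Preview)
Your proposal is correct and follows essentially the same strategy as the paper's proof: both establish that each $t_G$ is a regular distribution by noting that the divergent loci of $f_G$ lie in the missing faces of $\sigma_G$ (hence are avoided by compactly supported test functions), and both verify the face relation via the restriction formula \eqref{restr} together with Proposition~\ref{condel} for the graph polynomials and the Heaviside convention $\theta(0)=\tfrac12$ for the jump in the exponent $s_G$. The only cosmetic difference is that the paper handles a general forest $F$ in a single step, whereas you reduce by induction to a single non-loop edge contraction; the ingredients and logic are otherwise identical.
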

\begin{proof}
 First, we show that each $t_G$ is a distribution on $\sigma_G$. Let $\varphi \in C_c^{\infty}(\sigma_G)$. Recall the discussion of divergences of the Feynman integral $I_G$ in Section \ref{renorm}. As $\varphi$ is compactly supported, it cannot meet the divergent locus of $G$ which is contained in the missing faces of $\sigma_G$. Hence, $\langle t_G \mid \varphi \rangle$ is well-defined for all $ \varphi \in C^{\infty}_c(\sigma_G)$. 
 
 Linearity of $T_G$ is clear. To probe the continuity of this map, let $\varphi_k$ be a sequence in $C^{\infty}_c(\sigma_G)$ converging to a test function $\varphi$. This means, there is a compact subset $K \subset \sigma_G$ with supp$(\varphi_k) \subset K$ for all $k \in \mb N$ and $\varphi_k \rightarrow \varphi$ uniformly on $K$. Since $\omega_G$ is a smooth differential form away from the divergent locus of $G$, all products $\varphi_k \omega_G$ are compactly supported differential forms, converging to $\varphi \omega_G$. More precisely, this holds on the interior of $\sigma_G$, but we can neglect the discontinuity of $f_G$ at $\sigma_G \setminus \dot \sigma_G$ since it is still bounded there and thus not seen by $\dim(\sigma_G)$-dimensional integration. 
We conclude
\begin{equation*}
 \langle t_G \mid \varphi_k \rangle = \int_{\sigma_G} \varphi_k \omega_G = \int_{K} \varphi_k \omega_G \longrightarrow \int_K \varphi \omega_G = \int_{\sigma_G} \varphi \omega_G=  \langle t_G \mid \varphi \rangle.
\end{equation*}

It remains to check that $t$ is compatible with face relations. Let $\sigma_\gamma \subset \sigma_G$, where $\gamma$ is obtained from $G$ by contraction of a forest $F \subset G$, i.e.\ $\sigma_\gamma$ is the subset of $\sigma_G$ where all edge variables associated to $F$ are set to zero. Since each $t_G$ is a regular distribution, its restriction to $\sigma_\gamma$ is given by \eqref{restr}, 
\begin{equation*}
 {t_G} \rest {\sigma_\gamma}: C_c^{\infty}(\sigma_\gamma) \ni \varphi \longmapsto \int_{\sigma_G} \varphi \delta(\sigma_\gamma) \omega_G,
\end{equation*}
where integration against $\delta(\sigma_\gamma)$ evaluates the integrand at all edge variables of $F$ set to zero. Therefore, this integral equals
\begin{equation*}
\int_{\sigma_G} \varphi \delta(\sigma_\gamma) \psi_G^{-\frac{d}{2}} \left(\frac{\psi_G}{\Xi_G}\right)^{-\frac{s_G}{2}} \nu_G = \int_{\sigma_\gamma} \varphi \left( \psi_G^{-\frac{d}{2}} \left(\frac{\psi_G}{\Xi_G}\right)^{-\frac{s_G}{2}} \right)\rest {x_e=0,  e \in F} \nu_\gamma  = \int_{\sigma_\gamma} \varphi \omega_\gamma ,
\end{equation*}
because both graph polynomials and the superficial degree of divergence respect face relations. The identity
\begin{equation*}
 \int \varphi \delta( \sigma_\gamma ) \nu_G = \int \varphi \rest {\sigma_\gamma} \nu_\gamma
\end{equation*}
follows from the definition of the restriction map on distributions using local coordinates.
\end{proof}

The previous discussion shows that for every generic momentum configuration $p$ we have a piecewise distribution $t(p)$ on $X$. Eventually we are interested in the value of $t(p)$ on the constant function $ \chi \equiv 1$, the \textit{(unrenormalised) Feynman amplitude} $\m A_n$ (\textit{of order $n$ at $p$}),
\begin{equation*}
\m A_n: p \longmapsto \langle t(p)\mid \chi \rangle := \sum_{\sigma \in X} \langle t_{\sigma}(p) \mid \chi \rest \sigma \rangle =\sum_{G \in \m X} \langle t_{G}(p) \mid \chi \rest {\sigma_G} \rangle.
\end{equation*}
Thus, renormalisation translates in this picture into the task of finding a well-defined expression for the limit $\lim_{k \rightarrow \infty} \langle t(p) \mid \varphi_k \rangle $ where $\varphi_k$ is a sequence of test functions converging to the characteristic function $\chi$ of the space $X$. 

\begin{rem}\label{rem6}
 \begin{enumerate}
  \item This is the algebraic geometer's definition of an amplitude as a projective integral. For a comparison to its ''real world'' version and a derivation of the latter see \cite{bk-as}. The constructions presented here work equally well in this case. It is important to note though that if $s_G=0$, then 
  \begin{equation*}
  f_G(p,m)=\psi_G^{-\frac{d}{2}} (\frac{\psi_G}{\Xi_G(p,m)})^{-\frac{s_G}{2}}= \psi_G^{-\frac{d}{2}}
  \end{equation*}
does not depend on $(p,m)$. The amplitude however does, by a multiplicative factor coming from renormalisation of the overall divergence of $G$.  
  \item By definition, this amplitude sums over all possible distributions of masses in graphs. If two or more masses are equal, these multiplicities can be taken into account with the help of appropriate symmetry factors.
  \item As mentioned above, for realistic field theories the valency of each vertex in an admissible graph is bounded also from above, so that one has to restrict the definition of a Feynman amplitude to an appropriate subcomplex of $X$. 
 \end{enumerate} 
\end{rem}

\section{Renormalisation on the compactification $\ti X_{n,k}$}\label{reonco}

To find a renormalised version of the Feynman distribution $t$ we will use the compactification $\beta:\ti X \rightarrow X$. Together with the pullback and pushforward operations on distributions this allows to study and control the behaviour of the divergent parts of each $t_G$. 

As we have seen above, the compactification $\ti X$ is a polytopal complex.
\begin{defn}
A polytopal complex $P$ is a (finite) collection of polytopes such that
\begin{enumerate}
\item if $q$ is a face of $p\in P$, then $q \in P$.
\item if $p,p' \in P$ and $p\cap p' \neq \emptyset$, then $p\cap p'= q \in P$.
\end{enumerate}
\end{defn}

All of the theory introduced in Section \ref{pscpwd} works also in the case of polytopal complexes (by definition and also because every polytopal complex can be triangulated into a simplicial complex). This allows to view $\beta^* t$ as a piecewise distribution on the compactification of $X$. Then, working on each polytope $\ti \sigma$ separately, we will find its divergent loci at the new faces of $\ti \sigma$ that are indexed by divergent subgraphs. It is important to note that the ultraviolet divergences we are considering here are independent of the colouring of $G$ and its subgraphs; as long as we are dealing with generic external momentum configurations or strictly positive masses, all possible divergences depend only on the topology of $G$ as uncoloured graph.

After all of these divergent loci and the corresponding poles are identified, we will employ the necessary subtraction operations to render each distributional piece of $\beta^* t$ finite. In the end we show that the so obtained distributions fit together in order to produce a piecewise distribution on $\ti X$ (or $X$ after pushing it back down along $\beta$) that has a finite value when evaluated at $\chi$. The result of this whole operation is then called the \textit{renormalised Feynman amplitude} $\m A_n^{\mathit{ren}}$.

As mentioned in the introduction this is nothing new, but merely a reformulation of renormalised parametric Feynman integrals in the context of moduli spaces of graphs. This problem has been studied long ago and solutions are well understood. Finite renormalised expressions for $I_G$ are given by various methods in the literature, see for instance \cite{bk,bk-as} or \cite{iz}.
\newline

Let $\ti \sigma$ denote the compactification of a pseudo simplex $\sigma$ as defined in Section \ref{compac} and $\beta: \ti \sigma \rightarrow \ol \sigma$ the corresponding projection. Recall that $\sigma$ corresponds to a holocoloured graph $G\in \m X$, where the colouring determines the explicit form of the graph polynomial $\Xi_G$. On the other hand, the shape of $\ti \sigma$ depends only on the topology of the graph $G$. 

Since distributions can be pulled back along smooth submersions, we have away from of the exceptional divisor $\m E$ 
\begin{equation*}
  \quad \langle \beta^* t_\sigma \mid \varphi \rangle = \langle t_\sigma \circ \beta \mid \varphi \rangle .
\end{equation*}
Of course, this is defined so far only for $\varphi$ compactly supported on the complement of $\m E$ in $\ti \sigma$; we have not taken care of the divergences yet. 
Moreover, note that no information is lost if we work on the compactification $\ti X$ because for a graph $G$, free of subdivergences, we have
\begin{equation*}
\forall \varphi \in C^{\infty}_c(\sigma):\  \int_{\ti \sigma_G} \beta^* (\varphi \omega_G ) = \int_{\sigma_G} \varphi \omega_G .
\end{equation*}
This follows from the fact that $\beta$ is a smooth isomorphism outside of $\m E$ which is a set of measure zero.

Thus, so far we have a collection $\tilde t= \{\beta^* t_{\sigma} \mid \sigma \in X\}$ that satisfies the properties of a piecewise distribution on $\ti X$, except we have not yet assigned an element of $\tilde t$ to every polytope in $\ti X$. The distributional pieces corresponding to polytopes in the exceptional divisor $\m E$ (and their faces) are still missing. Wherever it is defined, $\tilde t$ respects face relations, but due to the presence of divergences we cannot use these relations to determine the value of it at all new faces.

The workaround is to use a regularisation as explained in Section \ref{renorm}. To do so, we consider the constant $d$ in \eqref{superficial} as a complex parameter allowing us to work with finite intermediate piecewise distributions that we can pull back and extend on the whole space $\ti X$. Thus, for $d\in \mb C$ we define the \textit{regularised Feynman piecewise distribution} $t^d$ on $X$ by 
\begin{equation*}
 t^d=\{ t^d_\sigma \mid \sigma \in X \} = \{ t^d_G \mid G \in \m X \} \text{ with } \langle t^d_G \mid \varphi \rangle:= \int_{\sigma_G}\varphi f^d_{G} \nu_G.
\end{equation*}
Since $f_G$ is given by
\begin{equation*}
f_G= \psi_G^{-\frac{d}{2}}\left( \frac{\psi_G}{\Xi_G} \right)^{N-|G|\frac{d}{2}}= \psi_G^{-\frac{d}{2}}\left( \frac{\Xi_G}{\psi_G} \right)^{\frac{s_G}{2}}
\end{equation*}
and the superficial degree of divergence $s_G$ is bounded for admissible graphs $G \in \m X$, we can choose $d\in \mb C$ so that $t^d$ is a piecewise distribution on $X$ for which the pullback along $\beta$ delivers finite distributions on each $\ti \sigma$. Then $\tilde t^d$ can be extended to a piecewise distribution on the whole space $\ti X$. 

\begin{rem}
 The approach presented here is known as \textit{dimensional regularisation}. Of course there are many ways to regularise Feynman integrals. Two other methods regularly used are 
 \begin{itemize}
  \item[-] using a \textit{cut-off}, i.e.\ by reducing the integration domain $\ti \sigma_G$ to the complement of an $\epsilon$-neighborhood of the exceptional divisor $\m E$. In some sense, this idea is implicitly built into our formulation of Feynman integrals as distributions on moduli spaces of graphs; the support of a compactly supported test function on $X$ can not intersect with any divergent locus $L_\gamma$ for some $\gamma \subset G \in \m X$ with $|\gamma|>0$. However, using dimensional regularisation additionally provides more control to study the divergences of the distributions $t_G$ in detail. For an approach that focuses solely on modifying the integration domain $\sigma_G$ for regularisation and renormalisation see \cite{bk}.
  \item[-] \textit{analytic regularisation} which writes the Feynman integral as a \textit{Mellin transform} by replacing the integrand $f_G$ with
  \begin{equation*}
    \psi_G^{-\frac{d}{2}}\left( \frac{\Xi_G}{\psi_G} \right)^{\frac{s_G}{2}} \prod_{e\in E(G)}x_e^{a_e-1}.
  \end{equation*}
One shows that there is an open set $A\subset \mb C^N$ such that the integral converges for $(a_1, \ldots, a_N)\in A$. Renormalisation amounts then to find an analytic continuation to the point $(1,\ldots, 1)\in \mb C^N$.
 \end{itemize}
\end{rem}

Write
\begin{equation*}
 \tilde t^d= \{\tilde t^d_{\Sigma} \mid \Sigma \in \ti X \}
\end{equation*}
for the collection of distributions, one for each polytope of $\ti X$, where
\begin{equation*}
\tilde t^d_\Sigma := \begin{cases} \beta^* t^d_\sigma & \text{ if $\Sigma=\ti \sigma$ is the blowup of a $\sigma \subset X$,} \\
           (\beta^*t^d_\sigma) \rest{\Sigma}  & \text{ if $\Sigma\subset \m E_\sigma$ is a new face in the blowup of a $\sigma \subset X$.} \end{cases}
\end{equation*}

From now on we fix a (coloured) graph $G$ and consider the distribution $\ti t^d_G$ on the polytope $\ti \sigma_G$ in $\ti X$. For the loci of possible divergences, that is for faces $\Sigma \subset \m E_G \subset \ti \sigma_G$, we use the graphical notation introduced in Section \ref{compac}, Equation \eqref{newface}. There we have for every pair $(G,\m F)$ with $\m F$ a flag of core subgraphs of $G$ (with the induced colouring)
\begin{equation*}
  G=G_0 \supset G_1 \supset \ldots \supset G_m
\end{equation*}
a distribution $\tilde t^d_\Sigma$ such that
\begin{enumerate}
 \item[-] if $m=1$, i.e. $\m F=G_1$, then $\Sigma=\m E_{G_1}=\beta^{-1}(L_{G_1})$ and $\tilde t^d_\Sigma ={\beta^* t^d_G} \rest{\m E_{G_1}}$. 
 \item[-] if $m>1$, then $\tilde t^d_\Sigma$ is given by the restriction of $\beta^* t^d_G$ to the common face of the $\m E_{G_i}=\ti \sigma_{G_i}$ indexed by the flag $\m F = (G_1,\ldots ,G_m)$.
\end{enumerate}

These are the loci of possible divergences of $\tilde t^d_{\Sigma}$. The other faces of $\Sigma$ (i.e.\ intersections with faces corresponding to non-core subgraphs of $G$) do not carry additional information. 

On $\ti \sigma_G$ we use local coordinates to study the distribution $\tilde t_G^d$ in the vicinity of $\m E \subset \partial \ti \sigma_G$. Let $\tau \subset \m E$ be a new face, $\tau \sim (G,\m F)$ with $\m F = (G_1,\ldots ,G_m)$, and define
\begin{equation}\label{flagsforests}
 \gamma_m:=G_m,\ \gamma_{m-1}:=G_{m-1} / G_m \ \ldots \  \gamma_0:=G / G_1.
 \end{equation}
 In affine coordinates $x_{e}=1$\footnote{In the following we omit the subscript for evaluation at $x_e=1$.} for $e \subset \gamma_0 = G / G_1$ write $y_0$ for the vector of edge variables of $\gamma_0 \setminus e$ and $ y_i$ for those associated to $\gamma_i$ ($i=1, \ldots, m$),
 \begin{equation*}
 y_0 := (y_0^e)_{e \in E(\gamma_0)}, \quad y_i :=  (y_i^e)_{e \in E(\gamma_i)}.
 \end{equation*}
Furthermore, choose for every $i = 1,\ldots,m$ a single coordinate $y_i^*$ in $y_i$ and denote by $\mathbf{y}_i$ the vector $y_i$ with this coordinate set to $1$. Then, following the wonderful model construction of \cite{dp}, the blowup sequence from $\sigma$ to $\ti \sigma$ is locally described by the coordinate transformation $\rho=\rho_{\m F,(y_1^*,\ldots,y_n^*)}$ given by
\begin{equation} \label{locobu}
  \rho: ( y_0 ,y_1, \ldots , y_m ) \longmapsto 
  \big(y_0, y_1^* \cdot \mathbf{y}_1, (y_1^* y_2^*) \cdot \mathbf{y}_2, \ldots, (y_1^* \cdots y_m^*) \cdot \mathbf{y}_m   \big).
\end{equation}
Thus, in local coordinates we have $\beta=\rho$. In order to detect the poles of $\rho^*\omega_G$ along $\tau$ we deduce its scaling behaviour from the contraction-deletion relations for graph polynomials.

\begin{lem}\label{rhosca}
Let $\rho$ be given by \eqref{locobu}. The graph polynomials $\psi_G$ and $\phi_G$ satisfy
\begin{equation*}
 \big(\psi_G \circ \rho\big)(y)= \prod_{i=1}^m (y_i^*)^{|G_i|} \ti \psi (y)
\end{equation*}
and
\begin{equation*}
 \big(\phi_G \circ \rho\big)(y)= \prod_{i=1}^m (y_i^*)^{|G_i|} \ti \phi (y) 
\end{equation*}
with $\ti \psi$ and $\ti \phi$ regular functions on $\mb R_+^{N-1}$.
\end{lem}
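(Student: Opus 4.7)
The plan is to deduce both scalings from the contraction-deletion relation \eqref{scaling}, once the effect of $\rho$ on $\psi_G$ and $\phi_G$ is translated into a clean multi-parameter scaling. The key dictionary I would set up first is this: recalling $\gamma_i = G_i/G_{i+1}$ for $i<m$ and $\gamma_m = G_m$, we have $E(G_k) = \bigcup_{i \geq k} E(\gamma_i)$; and by inspection of \eqref{locobu}, varying the single coordinate $y_k^*$ while holding all other source coordinates of $\rho$ fixed multiplies $x_e$ by $y_k^*$ precisely when $e \in E(G_k)$, and leaves $x_e$ untouched otherwise. This is because $y_k^*$ appears as a factor of $y_1^* \cdots y_i^*$ iff $i\geq k$.

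With this dictionary in hand, the next step is to apply \eqref{scaling} with $\gamma = G_k$ for each $k$ separately. Since $\psi_{G_k}$ is homogeneous of degree $|G_k|$ in the $E(G_k)$-variables, $\psi_{G/G_k}$ is independent of them, and $R_{G_k}$ has strict degree $>|G_k|$ in them, scaling the $E(G_k)$-variables by $t$ yields an expansion that begins at order $t^{|G_k|}$. Specializing $t=y_k^*$ via the dictionary, this shows that $\psi_G \circ \rho$, viewed as a polynomial in $y_k^*$ with coefficients in the remaining local coordinates, is divisible by $(y_k^*)^{|G_k|}$.

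Finally I would assemble the $m$ divisibilities: $\psi_G \circ \rho$ is a polynomial in all coordinates, and $y_1^*,\ldots,y_m^*$ are pairwise coprime in this polynomial ring (a UFD), so separate divisibility by each $(y_k^*)^{|G_k|}$ implies divisibility by the product $\prod_{k=1}^m (y_k^*)^{|G_k|}$ (peel off one factor at a time). The quotient $\ti\psi$ is then a polynomial, hence regular on $\mb R_+^{N-1}$.

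The argument for $\phi_G$ is verbatim the same, based on the companion identity $\phi_G = \psi_{G_k}\phi_{G/G_k} + R'_{G_k}$ from \eqref{scaling}: the scaling of the $E(G_k)$-variables again contributes a factor $t^{|G_k|}$ (coming from $\psi_{G_k}$, not from $\phi_{G_k}$), with the remainder in strictly higher order, so the same divisibility argument applies. The only genuine difficulty is the first step---recasting the wonderful-model coordinates of \eqref{locobu} as a simple multi-parameter scaling compatible with \eqref{scaling}; the remaining commutative algebra is elementary.
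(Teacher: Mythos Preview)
Your argument is correct and rests on the same key input as the paper---the contraction-deletion relation \eqref{scaling}---but the organization differs. The paper proceeds sequentially along the flag: it first writes $\psi_G=\psi_{G_1}\psi_{G/G_1}+R_1$, pulls out $(y_1^*)^{|G_1|}$, then further factorizes $\psi_{G_1}=\psi_{G_2}\psi_{G_1/G_2}+R_2$ to extract $(y_2^*)^{|G_2|}$, and so on down the chain. You instead apply \eqref{scaling} once for each $G_k$ independently (using the observation that $y_k^*$ scales exactly the $E(G_k)$-variables), obtain $m$ separate divisibility statements, and combine them by coprimality in a UFD.

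Both routes are equally valid for the lemma as stated. The paper's iterative version has the side benefit of tracking the regular part $\ti\psi$ explicitly as a nested product-plus-remainder, which is then exploited in the proof of Proposition~\ref{poles} to compute $\ti f_G|_\tau$; your cleaner UFD argument establishes the factorization without producing that explicit description, so you would need to revisit the structure of $\ti\psi$ when you get there.
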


\begin{proof}
 Both statements follow from \eqref{scaling} in Proposition \ref{condel}. As first step, we recall that \begin{equation*}
     \psi_G=\psi_{G_1} \psi_{G / G_1} + R_1                                                                                                                                         \end{equation*}
with $\psi_{G_1}$ and $\psi_{G / G_1}  $ depending only on $(y_1,\ldots,y_m)$ and $y_0$, respectively, and $R_1$ of degree $d_1>$ deg$(\psi_{G_1})=|G_1|$. Thus,
\begin{equation*}
 \psi_G\circ \rho=(y_1^*)^{|G_1|} \big( \psi_{G_1}' \psi_{G / G_1}' + R_1' \big)\rest{y_i^*=1} =: (y_1^*)^{|G_1|} \ti \psi_1    
\end{equation*}
where the prime $'$ denotes evaluation with the edge variables $(y_2,\ldots,y_m)$ still scaled by the map $\rho$.
Because $\psi_{G_1}$ further factorizes, 
\begin{equation*}
\psi_{G_1} '= \psi_{G_2} ' \psi_{G_1/G_2} ' + R_2',
\end{equation*}
and because $R_1$ is of degree $d_1>|G_1|>|G_2|$ also in $(y_2,\ldots,y_m)$, we can repeat the above argument to conclude
\begin{equation*}
 \big( \psi_{G_1} ' \psi_{G / G_1}' + R_1 ' \big)\rest{y_1^*=1} =   \big( ( \psi_{G_2} ' \psi_{G_1/G_2} ' + R_2') \psi_{G / G_1}' +R_1'\big)\rest{y_1^*=1}   =  (y_2^*)^{|G_2|}   \ti \psi_2.
\end{equation*}

After $m$ steps we arrive at the desired equation. The case $\phi_G$ works analogous.
\end{proof}

\begin{prop}\label{poles}
Consider a new face $\tau \subset \ti \sigma_G$, $\tau \sim (G,\m F)$ with $\m F=(G_1,\ldots,G_m)$. Then the differential form $\beta^*\omega_G$ has poles along $\tau$, one for each divergent $G_i$, of order $\frac{s_{G_i}}{2}+1$. Its regular part $\ti f_G$ satisfies
\begin{equation*}
 {{\ti f}_{G}}{\rest{\tau}}= \big( \psi_{G_m} \psi_{G_{m}/G_{m-1}} \cdots \psi_{G_2 / G_1} \big)^{-\frac{d}{2}} f_{G/G_1}  = \big( \prod_{i=1}^m \psi_{\gamma_i}\big)^{-\frac{d}{2}} f_{\gamma_0}.
\end{equation*}
\end{prop}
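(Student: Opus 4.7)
The plan is to pull back $\omega_G = f_G\,\nu_G$ along the local coordinate change $\rho$ of \eqref{locobu} and to read off both the pole orders in each $y_i^*$ and the restriction $\widetilde f_G|_\tau$ from the resulting factorization. The calculation splits cleanly into (a) the $y_i^*$-scaling of the graph polynomials $\psi_G$, $\phi_G$, and thereby $\Xi_G$; (b) the Jacobian $\rho^*\nu_G$; and (c) assembling the exponents in $y_i^*$.

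For (a) Lemma \ref{rhosca} already supplies $\psi_G\circ\rho = \prod_i (y_i^*)^{|G_i|}\widetilde\psi$ and analogously for $\phi_G$. To identify $\widetilde\psi|_\tau$ I would partition the sum over spanning trees of $G$ according to their intersection with each $G_j$: the minimal power of every $y_j^*$ is attained exactly when $T\cap G_j$ is a spanning tree of $G_j$ for all $j$, and such $T$ biject with tuples of spanning trees of the quotients $\gamma_i$, giving $\widetilde\psi|_\tau = \prod_{i=0}^m\psi_{\gamma_i}$. The analogous 2-forest argument yields $\widetilde\phi|_\tau = \phi_{\gamma_0}\prod_{i\geq 1}\psi_{\gamma_i}$. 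The extension to $\Xi_G$ is via the mass term $\psi_G\sum_e m_e^2 x_e$: since $x_e\circ\rho = y_0^e$ for $e\in\gamma_0$ but $x_e\circ\rho = (y_1^*\cdots y_j^*)\,y_j^e$ carries extra vanishing factors for $e\in\gamma_j$ with $j\geq 1$, only the $\gamma_0$-portion of the mass term survives on $\tau$. Hence $\widetilde\Xi|_\tau = \Xi_{\gamma_0}\prod_{i\geq 1}\psi_{\gamma_i}$, and the crucial simplification $(\widetilde\Xi/\widetilde\psi)|_\tau = \Xi_{\gamma_0}/\psi_{\gamma_0}$ drops out after cancellation.

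For (b) the iterated blow-up description of $\widetilde\sigma_G$ gives $\rho^*\nu_G = \prod_i (y_i^*)^{N_{G_i}-1}\eta$: the blow-up along $L_{G_i}$ has codimension $N_{G_i}$, contributing an exceptional-divisor Jacobian $(y_i^*)^{N_{G_i}-1}$, and $\eta$ is a smooth form whose restriction to $\tau$ is $\nu_{\gamma_0}\wedge\cdots\wedge\nu_{\gamma_m}$ up to the transverse $dy_i^*$-factors. Combining (a) and (b),
\begin{equation*}
\rho^*\omega_G = \prod_i (y_i^*)^{-d|G_i|/2 + N_{G_i} - 1}\,\widetilde f_G\,\eta = \prod_i (y_i^*)^{-s_{G_i}/2 - 1}\,\widetilde f_G\,\eta,
\end{equation*}
using $s_{G_i} = d|G_i| - 2N_{G_i}$. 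This exposes a pole of order $s_{G_i}/2 + 1$ in $y_i^*$ precisely when $G_i$ is divergent, and inserting the restrictions from (a) and isolating the $\gamma_0$-factor gives $\widetilde f_G|_\tau = \bigl(\prod_{i\geq 1}\psi_{\gamma_i}\bigr)^{-d/2} f_{\gamma_0}$ as claimed.

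The hardest step will be (b): one has to check that at each level the distinguished edge $*_i$ lies in $\gamma_i$ and is not further rescaled at subsequent levels, so that $dx_{*_i}$ reduces to a clean $dy_i^*$ and every other wedge monomial in $\bigwedge_{e\in G_i} dx_e$ is killed by a repeated $dy_i^*$; only the top-degree term $(y_i^*)^{N_{G_i}-1}\bigwedge d\mathbf{y}_i^e$ survives. A secondary delicacy is matching the two halves of $\Xi_G$, namely the $\phi_G$-part and the mass term, so that their restrictions recombine on $\tau$ into $\Xi_{\gamma_0}$ multiplied precisely by the same $\prod_{i\geq 1}\psi_{\gamma_i}$ that appears in $\widetilde\psi|_\tau$, thereby producing the announced factorization without spurious residual factors of $\psi_{\gamma_i}$ in the quotient $\widetilde\Xi/\widetilde\psi$.
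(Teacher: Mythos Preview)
Your proposal is correct and follows essentially the same route as the paper: pull back $f_G$ via Lemma~\ref{rhosca}, compute the Jacobian of $\rho$ on $\nu_G$, and combine exponents to read off the pole orders $-|G_i|\tfrac{d}{2}+|E(G_i)|-1=-\tfrac{s_{G_i}}{2}-1$. The only difference is in how you identify $\widetilde f_G|_\tau$: the paper does not redo a spanning-tree/2-forest count but simply observes that on $\tau$ (where all $y_i^*=0$) the remainder terms $R_i'$ from the iterated factorizations in the proof of Lemma~\ref{rhosca} vanish, so $\widetilde\psi$ and $\widetilde\Xi$ collapse to the product form directly. Your combinatorial argument is of course what underlies Proposition~\ref{condel} in the first place, so this is a matter of packaging. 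Also, step~(b) is easier than you anticipate: the paper obtains $\rho^*\nu_G=\prod_i(y_i^*)^{|E(G_i)|-1}\nu_G$ in one line from the explicit form of $\nu_G$ and the substitution~\eqref{locobu}, without any blow-up geometry or careful tracking of distinguished edges.
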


\begin{proof}
 Combining the result of Lemma \ref{rhosca} with the definition of $\omega_G$ in \eqref{omega} we find
 \begin{equation*}
  \big(f_G\circ \rho\big)(y)= \prod_{i=1}^m (y_i^*)^{-|G_i|\frac{d}{2}} \ti f_G
 \end{equation*}
with $\ti f_G$ regular. For the differential form $\nu_G$ we have
\begin{align*}
 \rho^* \nu_G & = \rho^*\big(\sum_{i=1}^N (-1)^i x_i dx_1 \wedge \ldots \wedge \widehat{dx_i} \wedge \ldots \wedge dx_N\big) \\
 & = \prod_{i=1}^m (y_i^*)^{|E(G_i)|-1}\nu_G.
\end{align*}
Putting everything together we conclude
\begin{equation}\label{locform}
 \rho^* \omega_G = \prod_{i=1}^m (y_i^*)^{-|G_i|\frac{d}{2}+|E(G_i)|-1} \ti f_G \nu_G= \prod_{i=1}^m (y_i^*)^{-\frac{s_{G_i}}{2}-1} \ti f_G \nu_G.
 \end{equation}
 
 Recall the notation from the proof of Lemma \ref{rhosca}. At $\tau$, where all $y_i^*=0$, every remainder term $R_i'$ in the factorizations of $\psi$ and $\phi$ vanishes. Therefore, the regular part $\ti f_G$ is given at $\tau$ by
 \begin{align*}
  {{\ti f}_{G}}{\rest{\tau}}= & \Big(  (\ti \psi_m)^{-\frac{d}{2}} \big( \frac{\ti \psi_m}{\ti \Xi_m} \big)^{-\frac{s_G}{2}}   \Big)\rest{\{y_i^*=0\}} \\
  = & \big( \psi_{G_m} \psi_{G_{m}/G_{m-1}} \cdots \psi_{G_2 / G_1} \psi_{G/G_1} \big)^{-\frac{d}{2}} \big( \frac{\ti \psi_m}{\ti \Xi_m} \big)^{-\frac{s_G}{2}}\rest{\{y_i^*=0\}} \\
  = &  \big( \prod_{i=1}^m \psi_{\gamma_i}\big)^{-\frac{d}{2}} \psi_{G/G_1}^{-\frac{d}{2}} \Big( \frac{\psi_{G/G_1}}{\Xi_{G/G_1}} \Big)^{-\frac{s_G}{2}} = \big( \prod_{i=1}^m \psi_{\gamma_i}\big)^{-\frac{d}{2}} f_{\gamma_0} \text{ as in \eqref{flagsforests}}
 \end{align*}
with 
\begin{equation*}
 \ti \Xi_m = \ti \phi_m +  (m_{c(e)}^2 + \sum_{e'\neq e} m_{c(e')}^2 \rho_{e'} )\ti \psi_m
\end{equation*}
since we are working in affine coordinates with $x_e=1$.
\end{proof}

\begin{rem}
 These poles are only superficial; the integrand $f_G$ might actually be better behaved. For example, consider the ''sunrise'' graph on two vertices connected by three internal edges with all $m_e=0$. For $d=4$ it has three divergent subgraphs, all satisfying $|\gamma|=1$ and $s_\gamma=0$. In this case, the second graph polynomial $\phi_G={\Xi_G}_{|{m=0}}$ scales as $x^{|\gamma|+1}$, so that there are no poles at the faces $\tau \sim (G,\gamma)$. In any case, we do not need to worry about this as incorporating trivial subtractions will not affect the final renormalisation.
 \end{rem}

  \begin{figure}[!htb]\label{sunrise}
  \begin{tikzpicture}
   \coordinate (v0) at (0,0);   
   \coordinate  (v1) at (2.3,0); 
   \draw (v0) -- (v1) node [xshift=-1.15cm,yshift=.8cm] {$1$};
   \draw (v0) to[out=66,in=114] (v1) node [xshift=-1.15cm,yshift=.18cm] {$2$};
   \draw (v0) to[out=-66,in=-114] (v1) node [xshift=-1.15cm,yshift=-0.4cm] {$3$};  
   \fill[black] (v0) circle (.066cm);
   \fill[black] (v1) circle (.066cm);
   \coordinate (p1) at (-0.66,0);
   \coordinate (p2) at (2.94,0);
   \draw (v0) -- (p1) node [xshift=.2cm,yshift=-0.2cm] {$p_1$};
   \draw (v1) -- (p2) node [xshift=-0.2cm,yshift=-0.2cm] {$p_2$};

   \coordinate  (d1) at (5,-0.3); 
   \coordinate (d2) at (5.5,-.8);   
   \coordinate  (d3) at (7.5,-.8);
   \coordinate (d4) at (8,-0.3);   
   \coordinate  (d5) at (6.9,1);
   \coordinate (d6) at (6.1,1); 
   
    \fill[fill=lightgray,opacity=0.6] (d1) -- (d2) -- (d3) -- (d4) -- (d5) -- (d6) -- (d1);
    
    \draw[red] (d1) -- (d2);
   \draw (d2) to (d3) node [xshift=-2.4cm,yshift=-.05cm] {$\m E_{1,2}$};
   \draw[red] (d3) -- (d4);
   \draw (d4) to (d5) node [xshift=1.1cm,yshift=-1.85cm] {$\m E_{2,3}$};
   \draw[red] (d5) -- (d6);
   \draw (d6) to (d1)node [xshift=1.5cm,yshift=1.5cm] {$\m E_{1,3}$};  
   \fill[red] (d1) circle (.04cm);
   \fill[red] (d2) circle (.04cm);
    \fill[red] (d3) circle (.04cm);
   \fill[red] (d4) circle (.04cm);
     \fill[red] (d5) circle (.04cm);
   \fill[red] (d6) circle (.04cm);
  \end{tikzpicture}  
\caption{The ''sunrise'' graph and its compactified cell $\ti \sigma_G \subset \ti X_{2,2}$}
 \end{figure}
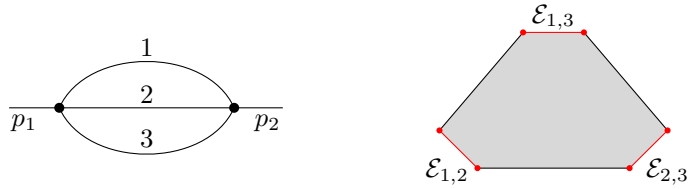
 
Proposition \ref{poles} shows that the new faces of $\ti \sigma_G$ encode all the divergent behaviour of $\tilde t_G$. Viewed as a meromorphic function of $d \in \mb C$ it has poles along the new faces $\tau$ of $\ti \sigma_G$ corresponding to divergent subgraphs, i.e.\ $\tau \sim (G=G_0 \supset G_1 \supset \ldots \supset G_m)$ with at least one $G_i$ divergent. In other words, $\tilde t_G$ diverges when applied to a test function whose support intersects a new face $\tau \subset \ti \sigma_G$ that is indexed by a divergent subgraph of $G$. Moreover, we can restrict to the case where $\tau \sim (G,\m F)$ and $\m F=(G_1 \supset G_2 \supset \ldots \supset G_m)$ with all $G_i$ divergent, other types of flags do not contain additional information about subdivergences.
\newline

In principle, we can now simply apply our favourite renormalisation scheme\footnote{See, for instance, Theorem $8$ in \cite{bk-as}; the formulae are rather long and not very enlightening per se and therefore omitted here.} to define a finite version $\tilde t^{\mathit{ren}}_G$ of $\tilde t_G$ on each $\ti \sigma_G$. Pushing this collection of distributions forward along the map $\beta$ produces then a piecewise distribution $ t^{\mathit{ren}}=\{t^{\mathit{ren}}_G\}$ on $X$, given by 
\begin{equation*}
 t_G^{\mathit{ren}}: C_c^{\infty}(\sigma_G) \ni \varphi \longmapsto \langle \beta_* \tilde t_G^{\mathit{ren}} \mid \varphi \rangle := \langle \tilde t_G^{\mathit{ren}} \mid \beta^*\varphi \rangle.
\end{equation*}
Notice, to obtain the value of $t^{\mathit{ren}}$ at the constant function $1 \equiv \chi \in C^{\infty}(X)$ we can circumvent this pushforward operation and thereby the need to approximate $\chi$ by a sequence of test functions. Instead, we simply evaluate each $\tilde t^{\mathit{ren}}_G$ at $1 \equiv \beta^*\chi \in C_c^{\infty}(\ti \sigma_G)$. Hence, the renormalised Feynman amplitude $\m A_n^{\mathit{ren}}$ is given by
\begin{equation*}
 \m A_n^{\mathit{ren}}: p \longmapsto  \sum_{G \in \m X} \langle \tilde t^{\mathit{ren}}_G(p) \mid 1 \rangle.
\end{equation*}
\newline

Another way to find $\tilde t^{\mathit{ren}}$, with the advantage of working entirely in the realm of piecewise distributions, is to treat renormalisation as an extension problem for distributions, as in the Epstein-Glaser method in the position space formulation of quantum field theory \cite{eg}. The key identity to start with is the local formula \eqref{locform} for $\tilde t_G$, valid in the vicinity of a new face $\tau \sim (G,\m F)$ described by the coordinate chart $\rho$ in \eqref{locobu},
\begin{equation*}
  \beta^* \omega_G \overset{loc.}= \rho^* \omega_G =  \prod_{ G_i \in \m F} (y_i^*)^{-\frac{s_{G_i}}{2}-1} \ti f_G \nu_G.
\end{equation*}
Let $U$ denote the chart domain of $\rho$. Equation \eqref{locform} allows to define a renormalisation operator $\m R_U$ that kills all poles in the local expression for $\beta^* \omega$ in $U$. Putting these pieces together using a partition of unity on $\ti \sigma_G$, subordinate to the charts $(U,\rho)$, produces then the desired renormalised distribution $\tilde t^{\mathit{ren}}_G$. 

In the following we sketch the construction of $\m R$ for graphs with only logarithmic subdivergences; details and proofs, including a discussion of a physicality condition on the proposed solution, can be found in \cite{mb1}. In the logarithmic case all the poles in \eqref{locform} are of first order only and the corresponding residua are all independent of the external data $(p,m)$. Thus, simple subtractions of these residua, one for each diverging coordinate direction, suffice to produce a finite expression for which the limit $d \rightarrow d_0$, i.e.\ $s_{G_i} \rightarrow 0$, exists. The operator $\m R_U$ is defined by
\begin{align*}
 \m R_U [ \rho^* \omega_G ] = & \m R_U [ \prod_{ G_i \in \m F} (y_i^*)^{-\frac{s_{G_i}}{2}-1} \ti f_G \nu_G ] \\
 :=&  \nu_G \prod_{ G_i \in \m F}(y_i^*)^{-1} \left(1 - \delta(y_i^*) \right) \ti f_G(p,m) \\
 = & \nu_G \prod_{G_i \in \m F} (y_i^*)^{-1} \sum_{\m H \subset \m F} (-1)^{|\m H|}  \prod_{G_i \in \m H}  \delta(y_i^*)\ti f_G(p,m),
\end{align*}
with $\delta$ the Dirac distribution, operating only on the regular part of $\rho^*\omega_G$ and the test function it is integrated with. 

The operator $\m R$ transforms $\tilde t_G$ into a well-defined distribution by removing all of its poles; if the support of a test function $\varphi$ intersects a divergent new face, then the corresponding pole of $\langle \tilde t_G \mid \varphi \rangle$ gets subtracted to make the integral finite. On the other hand, for every test function $\varphi$ with its support disjoint from any new divergent face we have $\langle \tilde t^{\mathit{ren}}_G \mid \varphi \rangle = \langle \tilde t_G \mid \varphi \rangle $ because all the subtracted terms vanish, $\langle \delta \mid \varphi \rangle = \varphi_{|{y_i^*=0}}=0$. Thus, $\tilde t^{\mathit{ren}}_G$ defines an extension of $\tilde t_G$ from the complement of all divergent new faces of $\ti \sigma_G$ to the whole cell.

\begin{example}
 The formula for $\m R$ is best understood by studying a concrete example. For the Dunce's cap graph we computed in Section \ref{feyint} 
 \begin{equation*}
  \psi_G(x_1, \ldots, x_4)=x_3x_4 + x_2x_3 + x_2x_4 + x_1x_3 + x_1x_4. 
 \end{equation*}
Locally in an affine chart where $x_1=1$ we have $\mathbf{y}_0=y_0=x_2$, $\mathbf{y}_1=(y_1^*,y_1)=(x_3,x_4)$ and
 \begin{equation*}
  \langle \rho^* \omega_G \mid \varphi \rangle =\int_{\mb R_+^3} dy (y_1^*)^{-1} \frac{\varphi(y)}{(y_1y_1^* + y_0 + y_0 y_1 + 1 + y_1)^{2}}.
 \end{equation*}
Therefore, $\langle \m R_U[\rho^* \omega_G] \mid \varphi \rangle$ is given by
\begin{equation*}
 \int_{\mb R_+^3} dy (y_1^*)^{-1} \Big( \frac{\varphi(y)}{(y_1y_1^* + y_0 + y_0 y_1 + 1 + y_1)^{2}} - \frac{\varphi(y)_{|{y_1^*=0}}}{( y_0 + y_0 y_1 + 1 + y_1)^{2}} \Big),
\end{equation*}
 which is a finite expression.
\end{example}

In the general case $\beta^*\omega_G$ may have poles of arbitrary high order. These can be reduced to poles of first order by partial integrations at the cost of boundary terms which in turn are then cured by subtracting terms of the Taylor expansion around a renormalisation point. Then simple subtractions as above allow to define a finite renormalised distribution on $\sigma_G$. Again, for details the reader is refer to the exhaustive exposition in \cite{bk-as}.

\bibliography{ref}

\begin{thebibliography}{CHKV16}

\bibitem[BEK06]{bek}
S.~Bloch, H.~Esnault, and D.~Kreimer.
\newblock On motives associated to graph polynomials.
\newblock {\em Commun. Math. Phys.}, 267, 2006.

\bibitem[Ber15]{mb1}
M.~Berghoff.
\newblock Wonderful compactifications in quantum field theory.
\newblock {\em Commun. Num. Theor. Phys.}, 9(3), 2015.

\bibitem[BF00]{bv}
M.~Bestvina and M.~Feighn.
\newblock The topology at infinity of $\uppercase{O}ut(\uppercase{F}_n)$.
\newblock {\em Invent. Math.}, 140(3), 2000.

\bibitem[BK08]{bk}
S.~Bloch and D.~Kreimer.
\newblock Mixed hodge structures and renormalization in physics.
\newblock {\em Commun. Numb. Theory and Phys.}, 2(4), 2008.

\bibitem[BK13]{bk-as}
F.~Brown and D.~Kreimer.
\newblock Angles, scales and parametric renormalization.
\newblock {\em Lett. Math. Phys.}, 103(9), 2013.

\bibitem[BK15]{bk-cr}
S.~Bloch and D.~Kreimer.
\newblock Cutkosky rules and \uppercase{O}uter space.
\newblock {\em arXiv:512.01705v1}, 2015.

\bibitem[Bro15]{fbmotic}
F.~Brown.
\newblock Feynman amplitudes and cosmic galois group.
\newblock {\em Lecture notes}, 2015.

\bibitem[BSV17]{v-bord}
K.-U. Bux, P.~Smillie, and K.~Vogtmann.
\newblock On the bordification of \uppercase{O}uter space.
\newblock {\em arXiv:1709.01296}, 2017.

\bibitem[CHKV16]{chkv}
J.~Conant, A.~Hatcher, M.~Kassabov, and K.~Vogtmann.
\newblock Assembling homology classes in automorphism groups of free groups.
\newblock {\em Comment. Math. Helv.}, 91(4), 2016.

\bibitem[CK00]{ck}
A.~Connes and D.~Kreimer.
\newblock Renormalization in quantum field theory and the
  \uppercase{R}iemann-\uppercase{H}ilbert problem \uppercase{I}: the
  \uppercase{H}opf algebra structure of graphs and the main theorem.
\newblock {\em Commun. Math. Phys.}, 210, 2000.

\bibitem[CP95]{dp}
C.~De Concini and C.~Procesi.
\newblock Wonderful models of subspace arrangements.
\newblock {\em Selecta Math. (N.S.)}, 1(3), 1995.

\bibitem[Cut60]{cr}
R.~Cutkosky.
\newblock Singularities and discontinuities of \uppercase{F}eynman amplitudes.
\newblock {\em J. Math. Phys.}, 1(5), 1960.

\bibitem[CV86]{cv}
M.~Culler and K.~Vogtmann.
\newblock Moduli of graphs and automorphisms of free groups.
\newblock {\em Inventiones}, 84, 1986.

\bibitem[EG73]{eg}
H.~Epstein and V.~Glaser.
\newblock The role of locality in perturbation theory.
\newblock {\em Ann. Poincare Phys. Theor.}, A19, 1973.

\bibitem[GM13]{grmo}
P.~Griffiths and J.~Morgan.
\newblock {\em Rational Homotopy Theory and Differential Forms}.
\newblock Progress in Mathematics. Birkh\, second edition, 2013.

\bibitem[GS64]{gs}
I.~M. Gelfand and G.~E. Shilov.
\newblock {\em Generalized Functions I}.
\newblock Academic Press, New York, 1964.

\bibitem[Hat02]{ah}
A.~Hatcher.
\newblock {\em Algebraic Topology}.
\newblock Cambridge University Press, 2002.

\bibitem[H{\H o}r90]{ho}
L.~H{\H o}rmander.
\newblock {\em The Analysis of Linear Partial Differential Operators I,
  Distribution Theory and Fourier Analysis}.
\newblock Classics in Mathematics. Springer, Berlin, second edition, 1990.

\bibitem[HV98]{hv}
A.~Hatcher and K.~Vogtmann.
\newblock Rational homology of $\uppercase{A}ut(\uppercase{F}_n)$.
\newblock {\em Math. Res. Let.}, (5), 1998.

\bibitem[IZ05]{iz}
C.~Itzykson and J.-B. Zuber.
\newblock {\em Quantum Field Theory}.
\newblock Dover publications, 2005.

\bibitem[Nic07]{ni}
L.~I. Nicolaescu.
\newblock {\em Lectures on the Geometry of Manifolds}.
\newblock World Scientific Publishing Company, second edition, 2007.

\bibitem[Sta87]{sta87}
R.~P. Stanley.
\newblock Generalized \uppercase{H}-vectors, intersection cohomology of toric
  varieties, and related results.
\newblock {\em Commutative Algebra and Combinatorics, Advanced Studies in Pure
  Mathematics}, 11, 1987.

\bibitem[Wei96]{weinberg}
S.~Weinberg.
\newblock {\em The Quantum Theory of Fields I}.
\newblock Cambridge University Press, second edition, 1996.

\bibitem[Zim69]{zman}
W.~Zimmermann.
\newblock Convergence of \uppercase{B}ogoliubov's method of renormalization in
  momentum space.
\newblock {\em Comm. Math. Phys.}, 15(3), 1969.

\end{thebibliography}
\bibliographystyle{alpha}
\end{document}